\documentclass[conference]{IEEEtran}
\IEEEoverridecommandlockouts
\usepackage{cite}
\usepackage{amsmath}
\usepackage{amssymb,amsfonts}
\usepackage{amsthm}
\usepackage{algorithmic}
\usepackage{graphicx}
\usepackage{textcomp}
\usepackage{xcolor}
\usepackage{braket}
\usepackage[hidelinks]{hyperref}
\usepackage[T1]{fontenc}
\usepackage{cleveref}

\crefformat{equation}{(#2)#1(#3)}
\crefrangeformat{equation}{(#3#1#4)--(#5#2#6)}

\pagestyle{empty}

\def\BibTeX{{\rm B\kern-.05em{\sc i\kern-.025em b}\kern-.08em
    T\kern-.1667em\lower.7ex\hbox{E}\kern-.125emX}}

\newcommand{\nb}{\bar{n}_\mathrm{B}}
\newcommand{\im}{j}
\newcommand{\todo}[1]{\textbf{\textcolor{purple}{TODO: #1}}}

\newtheorem{theorem}{Theorem}
\newtheorem{lemma}{Lemma}

\DeclareMathOperator{\tr}{tr}

\allowdisplaybreaks

\makeatletter 
\newcommand{\linebreakand}{%
  \end{@IEEEauthorhalign}
  \hfill\mbox{}\par
  \mbox{}\hfill\begin{@IEEEauthorhalign}
}
\makeatother 

\begin{document}

\title{Covert Quantum Communication Over Optical Channels
\thanks{This work was supported by the National Science Foundation under Grants No. CCF-2006679 and EEC-1941583.}}

\author{\IEEEauthorblockN{Evan J. D. Anderson}
\IEEEauthorblockA{\textit{Wyant College of Optical Sciences} \\
\textit{University of Arizona}\\
Tucson, AZ, USA \\
ejdanderson@arizona.edu}
\and
\IEEEauthorblockN{Christopher K. Eyre}
\IEEEauthorblockA{\textit{Dept. of Mathematics} \\
\textit{ Brigham Young University}\\
Provo, UT, USA}\\
\and 
\IEEEauthorblockN{Isabel M. Dailey}
\IEEEauthorblockA{\textit{Dept. of Electrical and Computer Engineering} \\
\textit{College of Engineering} \\
\textit{University of Arizona}\\
Tucson, AZ, USA}\\
\and

\linebreakand 

\IEEEauthorblockN{Filip Rozpędek}
\IEEEauthorblockA{\textit{College of Information \& Computer Sciences} \\
\textit{University of Massachusetts Amherst}\\
Amherst, MA, USA}\\
\and
\IEEEauthorblockN{Boulat A. Bash}
\IEEEauthorblockA{\textit{Dept. of Electrical and Computer Engineering} \\
\textit{College of Engineering} \\
\textit{Wyant College of Optical Sciences} \\
\textit{University of Arizona}\\
Tucson, AZ, USA}
}


\maketitle

\begin{abstract}
We explore covert communication of qubits over the lossy thermal-noise bosonic channel, which is a quantum-mechanical model of many practical channels, including optical. Covert communication ensures that an adversary is unable to detect the presence of transmissions, which are concealed in channel noise. We show a \emph{square root law} (SRL) for quantum covert communication similar to that for classical: $\propto\sqrt{n}$ qubits can be transmitted covertly and reliably over $n$ uses of an optical channel. Our achievability proof uses photonic dual-rail qubit encoding, which has been proposed for long-range repeater-based quantum communication and entanglement distribution. Our converse employs prior covert signal power limit results and adapts well-known methods to upper bound quantum capacity of optical channels. Finally, we believe that the gap between our lower and upper bounds for the number of reliable covert qubits can be mitigated by improving the quantum error correction codes and quantum channel capacity bounds.
\end{abstract}

\begin{IEEEkeywords}
covert quantum communication, quantum communication, bosonic quantum channel
\end{IEEEkeywords}

\section{Introduction}
Covert, or low probability of detection/intercept (LPD/LPI) communication renders adversaries unaware of the presence of transmission between two or more parties. The last decade saw much exploration of the fundamental limits of covert communication over classical channels, with \cite{bash15covertcommmag,bash13squarerootjsac,bloch15covert, wang15covert} leading to many follow-on works. However, the physics which underpins these channels is quantum.  This motivated recent work on covert classical-quantum channels \cite{bash15covertbosoniccomm, bullock20discretemod, gagatsos20codingcovcomm, azadeh16quantumcovert-isitarxiv, bullockCovertCommunicationClassicalQuantum2023}. For all these channels, covert communication is fundamentally governed by the \textit{square root law} (SRL) which limits communication that is both covert and reliable to $\propto\sqrt{n}$ bits over $n$ channel uses. In this paper, we extend these results to quantum covert communication over lossy thermal-noise bosonic channels. Such channels model optical fiber, and free space communication in the optical, microwave, and radio-frequency regimes. Here, we study the achievability of quantum covert communication over such a channel utilizing dual-rail photonic qubits, as well as the converse using well-known upper bounds on the quantum capacity of a lossy thermal-noise bosonic channel. 

Dual-rail qubits apply to many quantum information processing tasks. They are used in cluster-state generation \cite{thomasEfficientGenerationEntangled2022}, which, e.g., can enable one-way quantum computing. Additionally, the dual-rail encoding is convenient for entanglement distribution in quantum networks. Indeed, it was used to demonstrate a loophole-free Bell inequality violation \cite{hensenLoopholefreeBellInequality2015}, and to entangle trapped-ion qubits spatially separated by \text{230 m} \cite{krutyanskiyEntanglementTrappedIonQubits2023}. Furthermore, dual-rail qubits have been proposed to transmit quantum information over long-range repeater-based quantum networks \cite{takedaDeterministicQuantumTeleportation2013, guhaRateLoss2015, dhara2023entangling, azuma2023repeatersurvey}. The dual-rail encoding is also commonly used in quantum key distribution (QKD) \cite{scarani09rmpQKD, Honjo08qkd}.

Covert quantum communication has been previously explored in the context of QKD \cite{arrazolaCovertQuantumCommunication2016, tahmasbi19covertqkd,tahmasbi20bosoniccovertqkd-jsait,tahmasbi20covertqkd}. Here, however, we take a direct approach and use dual-rail encoding to address quantum covert communication over the lossy thermal-noise bosonic channel. 
Specifically, we adapt the analysis from covert classical-quantum channels \cite{bash15covertbosoniccomm, bullock20discretemod, gagatsos20codingcovcomm, azadeh16quantumcovert-isitarxiv, bullockCovertCommunicationClassicalQuantum2023}. 
We believe that this approach extends naturally to other quantum encodings and channels. Analogous to the $\sqrt{n}$ scaling for bits in the classical-classical and classical-quantum channels, we find achievability of the SRL, where one can transmit reliably at least $\propto \sqrt{n}$ covert qubits with the dual-rail encoding over $n$ uses of a lossy thermal-noise bosonic channel. 

In the converse, we use the upper bound on the number of photons per mode that is covertly transmissible over a lossy thermal-noise bosonic channel \cite{bullock20discretemod}. Adapting the upper bound on the energy-constrained quantum capacity of lossy thermal-noise bosonic channel \cite{sharmaBoundingEnergyconstrainedQuantum2018} shows that at most $\propto \sqrt{n}$ covert qubits can be reliably sent over $n$ uses of this channel, matching the achievable lower bound scaling. However, the gap between the achievability and the converse remains open. 

The rest of this paper is organized as follows: in Section~\ref{sec:preliminaries} we provide the mathematical preliminaries as well as the system and channel models. In Section~\ref{sec:covertcommunication} we describe the mathematical formalism underpinning covert communication and provide our results. Finally, we wrap up in Section~\ref{sec:discussion} with a discussion of our results and future research.

\section{Preliminaries}
\label{sec:preliminaries}

\subsection{Dual-rail Qubits}
\label{sec:dual-rail}

The dual-rail qubit is a well-known encoding of qubits into single photons in linear-optical quantum computing \cite{knillSchemeEfficientQuantum2001} and quantum communication  \cite{takedaDeterministicQuantumTeleportation2013}. With dual-rail encoding, a qubit is represented by the presence of a single photon in one of two optical modes. The logical states are physically represented using two-mode Fock (photon number) states: $|0\rangle_L = |01\rangle$ and $|1\rangle_L = |10\rangle$. The general logical qubit state $|\psi\rangle$ is then a superposition of the two states: 
\begin{align}
|\psi\rangle = \alpha|0\rangle_L + \beta|1\rangle_L = \alpha|01\rangle + \beta|10\rangle \label{eq:dual-rail-qubit}
\end{align}
and $\hat{\rho}_{\alpha,\beta} = |\psi\rangle\langle\psi|$ is the state's density operator with $\alpha, \beta \in \mathbb{C}$ normalized such that $|\alpha|^2 + |\beta|^2 = 1$.

We call our fundamental transmission unit a \emph{round}.  We transmit one qubit per round, occupying two optical modes.

\subsection{System and Channel Model}
\begin{figure*}[htb]
\centering
\includegraphics[width=\textwidth]{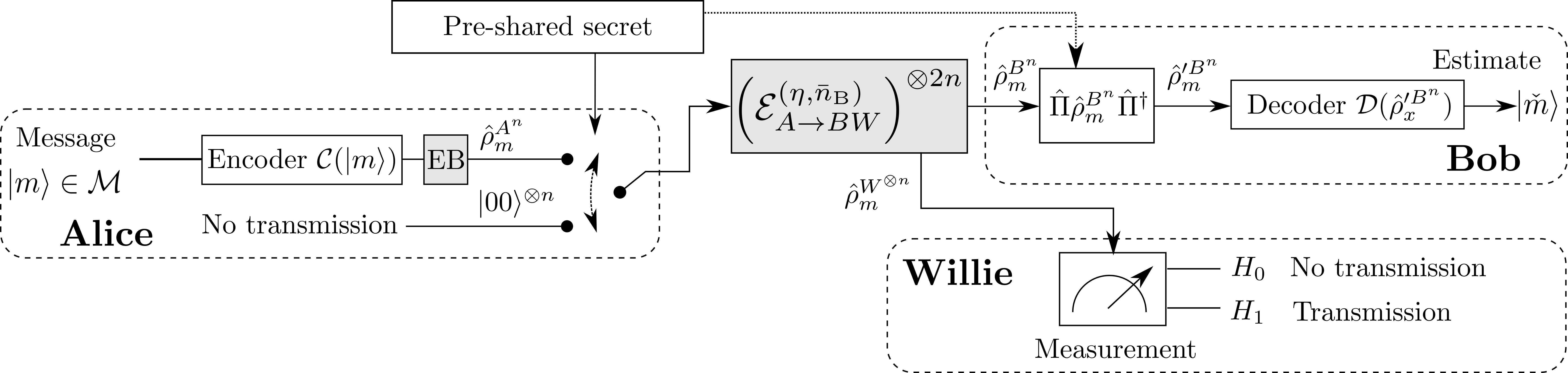}
\caption{Covert quantum communication over the lossy thermal-noise bosonic channel $\mathcal{E}^{(\eta,\bar{n}_{\rm B})}_{A\to BW}$. 
Alice and Bob employ a random coding scheme from \cite[Secs.~23.3 and 24.4]{wilde16quantumit2ed}. Transmission occurs with probability $q$ in each of the $n$ two-mode rounds. The rounds chosen for transmission constitute a pre-shared classical secret unknown to Willie.  Alice chooses a quantum message $|m\rangle \in \mathcal{M}$.
Alice employs additional noise or loss before transmitting to ensure that her channel to Willie is entanglement breaking. This is represented by the block labeled ``EB.'' For each of the chosen rounds, Bob performs a probabilistic projection represented by operator $\hat{\Pi}$ to the dual-rail basis in \eqref{eq:dual-rail-qubit}. On failure, he replaces the state with a completely mixed state. He then decodes to obtain an estimate $|\check{m}\rangle$ of the original message. Willie tests between hypotheses $H_0$ and $H_1$ to determine if Alice is transmitting. }
\label{fig:system}
\end{figure*}

Alice employs blocks of $n$ two-mode rounds to encode each covert quantum message $|m\rangle$ using dual-rail qubits described in Section \ref{sec:dual-rail} and vacuum states $|00\rangle\langle00|$.
Thus, she employs a total of $2n$ optical modes. Utilizing the pre-shared classical secret as described in Section \ref{sec:covertcommunication}, she either sends a dual-rail qubit or vacuum $|00\rangle\langle00|$ through the lossy thermal-noise bosonic channel, as detailed in Fig.~\ref{fig:system}. 
The channel acts on each optical mode independently. Bob attempts to decode his received state utilizing the shared secret to obtain an estimate $|\check{m}\rangle$ of the message, while the adversary warden Willie tries to detect Alice's transmission.

Consider a channel, $\mathcal{E}_{A\to BW}^{(\eta,\bar{n}_\textrm{B})}$, in Fig.~\ref{fig:system} that is described by a beamsplitter with transmittance $\eta \in [0,1]$, two input modes (Alice and the environment), and two output modes (Bob and Willie). These modes are labeled by their modal annihilation operators $\hat{a}, \hat{e}, \hat{b}$, and $\hat{w}$ respectively.  Their input-output modal relationships are:
\begin{align}
     \hat{b} = \sqrt{\eta}\hat{a}+\sqrt{1-\eta}\hat{e} \ \text{ and }  \ 
     \hat{w} = \sqrt{1-\eta}\hat{a}-\sqrt{\eta}\hat{e} \label{eq:bsmodal-willie}.
\end{align}
For the \emph{lossy thermal-noise bosonic channel}, the input state of mode $\hat{e}$ is $\hat{\rho}_{\bar{n}_{\textrm{B}}}$, a zero-mean thermal state with mean photon number $\bar{n}_{\textrm{B}}$ expressed by the following sum over the diagonal elements in the Fock (photon number) basis $|k\rangle$:
\begin{align}
\hat{\rho}_{\bar{n}_{\textrm{B}}} &\equiv  \sum_{k=0}^{\infty} t_k(\bar{n}_{\textrm{B}}) |k\rangle\langle k|, \text{~where~} 
    t_k(\bar{n}_{\textrm{B}}) = \frac{\bar{n}_{\textrm{B}}^k}{(1+\bar{n}_{\textrm{B}})^{k+1}}. \label{eq:thermCoefficient}
\end{align}
A \emph{pure-loss} bosonic channel has $\bar{n}_{\textrm{B}}=0$ and mixes the input with vacuum rather than thermal noise.
Finally, we note that the lossy thermal-noise bosonic channel belongs to a well-studied class of bosonic Gaussian channels \cite{weedbrook12gaussianQIrmp}.

\subsection{Entanglement-Breaking Channel}
An entanglement-breaking channel breaks entanglement between input quantum states at the channel's output. Entanglement is broken in a lossy thermal-noise bosonic channel if $\bar{n} > \kappa$ where $\bar{n} > 0$ is the mean photon number of the thermal noise added by the channel and $\kappa$ is the fraction of the input photon number at the output \cite{holevoEntanglementbreakingChannelsInfinite2008}. In the following analysis, we assume that the Alice-to-Willie channel is naturally entanglement breaking, corresponding to $\eta\bar{n}_{\textrm{B}} > 1-\eta$,  as is typical in optical communication systems. It is unknown whether the channel to the adversary must be entanglement breaking for covert quantum communication, and is a subject of ongoing investigation. However, if the physical channel does not break entanglement, Alice may introduce additional loss or noise after encoding to ensure entanglement is broken, per the following two lemmas:

\begin{lemma} \label{lemma:ent-break-trans}
    Entanglement is broken in the Alice-to-Willie channel by passing Alice's signal through a pure-loss channel with transmittance $\tau < \frac{\eta}{1-\eta}\nb$ prior to transmission, meeting the entanglement-breaking condition  $\eta\bar{n}_{\textrm{B}} > \tau(1-\eta)$.
\end{lemma}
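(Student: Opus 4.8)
The plan is to treat the cascade of the pre-transmission pure-loss stage with the Alice-to-Willie branch of the physical beamsplitter as a single phase-insensitive bosonic Gaussian channel, and then apply the entanglement-breaking criterion ``$\bar{n} > \kappa$'' recalled above. First I would model the added pure-loss channel of transmittance $\tau$ as a beamsplitter mixing Alice's mode $\hat{a}$ with an ancillary vacuum mode $\hat{v}$, so that the mode entering the physical channel is $\sqrt{\tau}\,\hat{a} + \sqrt{1-\tau}\,\hat{v}$. Substituting this into the Willie output relation in \eqref{eq:bsmodal-willie} gives
\begin{align}
\hat{w} = \sqrt{\tau(1-\eta)}\,\hat{a} + \sqrt{(1-\tau)(1-\eta)}\,\hat{v} - \sqrt{\eta}\,\hat{e}. \nonumber
\end{align}

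Next I would read off the two quantities that enter the criterion directly from this relation. The fraction of Alice's input photon number reaching Willie is the squared amplitude on $\hat{a}$, namely $\kappa = \tau(1-\eta)$. The only thermal contribution to $\hat{w}$ is the environment mode $\hat{e}$, which is thermal with mean photon number $\nb$ and enters with squared amplitude $\eta$, while the ancilla $\hat{v}$ is vacuum and contributes no photons; hence the mean photon number of the thermal noise added by the composed channel is $\bar{n} = \eta\nb$. Applying $\bar{n} > \kappa$ then yields $\eta\nb > \tau(1-\eta)$, which is the stated entanglement-breaking condition, and dividing through by $1-\eta$ gives the equivalent bound $\tau < \frac{\eta}{1-\eta}\nb$.

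The conceptual content here is light and the work is mostly careful bookkeeping, so the step I expect to require the most care is justifying that the cascade remains a phase-insensitive Gaussian channel to which the cited criterion of \cite{holevoEntanglementbreakingChannelsInfinite2008} applies, and that prepending a pure-loss stage merely rescales the effective transmittance to Willie (from $1-\eta$ down to $\tau(1-\eta)$) while leaving the added thermal-noise photon number $\eta\nb$ unchanged. The subtlety is separating the vacuum contribution of the added loss, which lowers $\kappa$ without adding any thermal photons, from the genuine thermal contribution of $\hat{e}$; once this separation is made explicit, the inequality follows immediately.
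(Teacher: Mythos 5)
Your proposal is correct and reaches the same conclusion by essentially the same underlying idea: exhibit the cascade (Alice's pure-loss stage followed by the Alice-to-Willie branch) as a single thermal attenuator with transmissivity $\kappa=\tau(1-\eta)$ and added thermal photon number $\bar{n}=\eta\nb$, then invoke the entanglement-breaking criterion $\bar{n}>\kappa$ of \cite{holevoEntanglementbreakingChannelsInfinite2008}. The difference is in how the effective parameters are certified. The paper first decomposes the complementary channel into a pure-loss channel followed by a quantum-limited amplifier of gain $G=\eta\nb/2+1$, merges the two pure-loss stages via Lemma~\ref{lemma:pure-loss}, and then verifies through the anti-normally ordered characteristic function that Willie's output state is that of a complementary channel with reflectance $\tau(1-\eta)$ and thermal contribution $\eta\nb$. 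You instead substitute the modal relation for the prepended beamsplitter directly into \eqref{eq:bsmodal-willie} and read the parameters off the resulting linear combination of $\hat{a}$, a vacuum ancilla, and the thermal environment mode. Your route is more elementary and avoids the amplifier decomposition entirely; its one soft spot is exactly the one you flag, namely that ``reading off'' $\kappa$ and $\bar{n}$ from the Heisenberg-picture relation implicitly asserts that the reduced map on Alice's mode is a phase-insensitive Gaussian attenuator with those parameters. That assertion is true (the unitary dilation is Gaussian, the ancillae are in a product Gaussian state, and the added-noise covariance works out to $(1-\kappa)/2+\eta\nb$ per quadrature, i.e., added photon number $\eta\nb$), and making it explicit via either second moments or the characteristic function would close the gap; the paper's characteristic-function computation is precisely that verification.
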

\begin{lemma} \label{lemma:ent-break-noise}
    Entanglement is broken in the Alice-to-Willie channel by passing Alice's signal through a quantum-limited amplifier with gain coefficient $G_{\rm eb}=2(1-\eta)/(2(1-\eta)-\eta\bar{n}_{\textrm{B}}^\prime)$ and $\bar{n}_{\textrm{B}}^\prime > \frac{\eta}{1-\eta}-\nb$, followed by a pure loss channel with transmittance $\tau = 1/G_{\rm eb}$. This meets the entanglement-breaking condition $\eta(\bar{n}_{\textrm{B}}+\bar{n}_{\textrm{B}}^\prime) > (1-\eta)$.
\end{lemma}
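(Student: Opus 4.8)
The plan is to treat Alice's preprocessing (the quantum-limited amplifier of gain $G_{\rm eb}$ followed by the pure-loss element of transmittance $\tau=1/G_{\rm eb}$) as a single phase-insensitive Gaussian channel, compose it with the beamsplitter relation in \eqref{eq:bsmodal-willie} to obtain the \emph{effective} Alice-to-Willie channel, and then invoke the entanglement-breaking criterion $\bar{n}>\kappa$ already cited from \cite{holevoEntanglementbreakingChannelsInfinite2008}. Since phase-insensitive Gaussian channels are closed under composition, the effective channel is again thermal, and the whole argument reduces to tracking two numbers: its transmittance $\kappa$ (the fraction of Alice's input photon number that reaches Willie) and the mean photon number $\bar{n}$ of the noise it stamps onto Willie's output. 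This mirrors the structure already used for Lemma~\ref{lemma:ent-break-trans}, with the amplifier replacing the pre-transmission attenuator.

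First I would write the Heisenberg-picture action of the preprocessing on Alice's mode $\hat{a}$. The amplifier sends $\hat{a}\mapsto\sqrt{G_{\rm eb}}\,\hat{a}+\sqrt{G_{\rm eb}-1}\,\hat{h}^\dagger$ with $\hat{h}$ in vacuum, and the subsequent loss with $\tau=1/G_{\rm eb}$ restores unit signal gain while attenuating the amplifier's injected noise. The net map is therefore an additive-noise channel of transmittance one, and a short covariance-matrix computation gives the number of noise photons it adds to Alice's mode in terms of $G_{\rm eb}$. Choosing $\tau=1/G_{\rm eb}$ is precisely what keeps Alice's signal amplitude unchanged, so that the dual-rail encoding of Section~\ref{sec:dual-rail} is undisturbed apart from the added noise, and the amplifier remains physical since $G_{\rm eb}\ge 1$ and $\tau\in[0,1]$ for $0<\bar{n}_{\mathrm{B}}^\prime<2(1-\eta)/\eta$.

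Next I would substitute this preprocessed mode into \eqref{eq:bsmodal-willie}. The environment mode $\hat{e}$ still contributes $\eta\bar{n}_{\mathrm{B}}$ noise photons at Willie, while the amplifier noise, coupled through the $\sqrt{1-\eta}$ port, contributes an additional term; collecting these yields the effective $\kappa=1-\eta$ together with the total added noise $\bar{n}$ at Willie. The gain $G_{\rm eb}=2(1-\eta)/\bigl(2(1-\eta)-\eta\bar{n}_{\mathrm{B}}^\prime\bigr)$ is evidently chosen to raise this total added noise to the level demanded by the criterion, so that $\bar{n}>\kappa$ becomes the stated condition $\eta(\bar{n}_{\mathrm{B}}+\bar{n}_{\mathrm{B}}^\prime)>1-\eta$; the accompanying requirement $\bar{n}_{\mathrm{B}}^\prime>\tfrac{\eta}{1-\eta}-\bar{n}_{\mathrm{B}}$ is what guarantees this inequality is strictly met. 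As a consistency check I would confirm that the two limits $\bar{n}_{\mathrm{B}}^\prime\to 0$ (equivalently $G_{\rm eb}\to 1$) reproduce the natural condition $\eta\bar{n}_{\mathrm{B}}>1-\eta$ quoted above.

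The main obstacle I anticipate is the noise bookkeeping in this second step: the amplifier's quantum-limited noise must be propagated through the beamsplitter with the correct $\sqrt{1-\eta}$ weighting, and one must express ``added photon number'' consistently (symmetric-ordered variance versus normal-ordered photon number, together with the vacuum offset) so that the threshold $\bar{n}=\kappa$ is applied in the very units in which $\bar{n}$ and $\kappa$ were defined in \cite{holevoEntanglementbreakingChannelsInfinite2008}. This is exactly the place where spurious factors tend to creep in, so the cleanest safeguard is to fix every normalization constant \emph{before} reading off $G_{\rm eb}$, by re-deriving both the natural condition and the pure-loss result of Lemma~\ref{lemma:ent-break-trans} as limiting cases of the same covariance-matrix calculation.
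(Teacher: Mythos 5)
Your plan is, in substance, the paper's own proof: the appendix writes the Heisenberg-picture modal relations for the cascade (amplifier of gain $G$, beamsplitter of transmittance $\tau=1/G$, then the physical beamsplitter of transmittance $\eta$), factors Willie's anti-normally ordered characteristic function into one term per input port, observes that $\tau=1/G$ makes the signal argument $\sqrt{G\tau(1-\eta)}\zeta=\sqrt{1-\eta}\zeta$ so the effective channel keeps transmittance $1-\eta$, reads off the extra thermal noise as an effective $\bar{n}_{\textrm{B}}^\prime$, and applies the Holevo condition $\eta(\nb+\bar{n}_{\textrm{B}}^\prime)>1-\eta$. Your covariance-matrix phrasing is an equivalent Gaussian-channel formalism, so there is no genuinely different route here.

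The one place you must be careful is precisely the "noise bookkeeping" you flag, because it is where your route and the paper's diverge by a constant. The paper models the amplifier by $\hat{a}^\prime=\sqrt{G}\hat{a}+\sqrt{G-1}\hat{v}$ (idler annihilation operator, not $\hat{v}^\dagger$) and consequently counts an anti-normally ordered vacuum contribution of $(1-1/G)(1-\eta)$ from \emph{each} of the idler port $\hat{v}$ and the loss port $\hat{u}$, giving $\eta\bar{n}_{\textrm{B}}^\prime=2(1-1/G)(1-\eta)$ and hence the factor of $2$ in $G_{\rm eb}=2(1-\eta)/(2(1-\eta)-\eta\bar{n}_{\textrm{B}}^\prime)$. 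A textbook quantum-limited amplifier couples to $\hat{v}^\dagger$, in which case the idler in vacuum contributes nothing extra to the anti-normal characteristic function and the standard photon-number count $G\bar{n}_{\rm in}+G-1$, attenuated by $\tau=1/G$ and reflected with weight $1-\eta$, yields only $\eta\bar{n}_{\textrm{B}}^\prime=(1-1/G)(1-\eta)$, i.e., $G=(1-\eta)/((1-\eta)-\eta\bar{n}_{\textrm{B}}^\prime)$ with no factor of $2$. So if you execute your covariance-matrix plan with the canonical amplifier relation you will not reproduce the lemma's stated $G_{\rm eb}$; to land on the paper's constant you must adopt the paper's modal convention. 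Note also that your proposed calibration against the $G_{\rm eb}\to1$ (equivalently $\bar{n}_{\textrm{B}}^\prime\to0$) limit cannot detect this discrepancy, since both expressions reduce to $G=1$ there; you would need to check against an independent computation of the output photon number at finite gain. Finally, as you implicitly noticed, the threshold $\bar{n}_{\textrm{B}}^\prime>\frac{\eta}{1-\eta}-\nb$ quoted in the lemma does not follow from $\eta(\nb+\bar{n}_{\textrm{B}}^\prime)>1-\eta$, which instead gives $\bar{n}_{\textrm{B}}^\prime>\frac{1-\eta}{\eta}-\nb$; your derivation should state the latter rather than inherit the former.
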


Proofs of both lemmas are in \ref{sec:induced-entanglement-breaking}. Enforcing the entanglement-breaking condition via either lemma reduces the number of qubits transmitted reliably and covertly without affecting the achievable SRL scaling. In Lemma~\ref{lemma:ent-break-trans} attenuation of Alice's signal breaks entanglement, while in Lemma~\ref{lemma:ent-break-noise} additional noise is used. Intuitively, Lemma~\ref{lemma:ent-break-trans} holds because a lossy thermal-noise bosonic channel decomposes into a pure-loss channel followed by a quantum-limited amplifier \cite{garciapatron2012majorization, caruso2006weakdegradability, rosatiNarrowBoundsQuantum2018}. We combine the pure-loss component with Alice's additional pure-loss channel. 
The resulting transmittance is the fraction of Alice's input photon number delivered to Willie, while the amplifier gain determines the thermal noise added by the channel. Proof of Lemma~\ref{lemma:ent-break-noise} is not as intuitive.

\section{Covert Communication}\label{sec:covertcommunication}
\subsection{Covertness Analysis}
\label{sec:covertnessanalysis}
Denote $\hat{\rho}_0^{W^n}$ and $\hat{\rho}_1^{W^n}$ as the respective states Willie observes when Alice is quiet or transmitting.
Since two-mode vacuum $|00\rangle\langle00|$  is input when Alice is quiet, $\hat{\rho}_0^{W^n}=\left(\hat{\rho}_0^W\right)^{\otimes n}$, where $\hat{\rho}_0^W = \hat{\rho}_{\eta\bar{n}_{\textrm{B}}} \otimes \hat{\rho}_{\eta\bar{n}_{\textrm{B}}}$ is a two-mode thermal product state \cite{weedbrook12gaussianQIrmp}.
Willie desires to determine if Alice and Bob are communicating; ergo, in $n$ rounds (uses of the two-mode bosonic channel), he tries to distinguish between $\hat{\rho}_1^{W^n}$ and $\left(\hat{\rho}_0^W\right)^{\otimes n}$. The null and alternate hypotheses $H_0$ and $H_1$ correspond to Alice being quiet and transmitting, respectively. Willie collects all the photons that do not reach Bob, as shown in Fig.~\ref{fig:system}. 

Willie can make two types of errors: a false alarm, where he decides that Alice is transmitting when she is not (choosing $H_1$ when $H_0$ is true), and a missed detection, where he decides that Alice is not transmitting when she is (choosing $H_0$ when $H_1$ is true). As is customary in the literature, we assume equal prior probabilities for the hypotheses $P(H_0) = P(H_1)=\frac{1}{2}$, though this is not a requirement. Thus, Willie's probability of error is: $P_e = \frac{P(H_0|H_1) + P(H_1|H_0)}{2}$.

Willie guessing randomly yields an ineffective detector with $P_e = \frac{1}{2}$. Hence, Alice's goal is to transmit so that $P_e$ is as close to $\frac{1}{2}$ as possible. Formally, we call any system covert if, for large enough $n$ and $\delta > 0$, $P_e \ge \frac{1}{2} - \delta$. With access to a quantum-optimal detector, Willie can achieve minimal $P_e = \frac{1}{2} - \frac{1}{4} \left\|\hat{\rho}_1^{W^{n}}-\left(\hat{\rho}_0^W\right)^{\otimes n}\right\|_1$ \cite[Sec.~9.1.4]{wilde16quantumit2ed}, where $\|\hat{A}\|_1\equiv\tr\left[\sqrt{\hat{A}^\dagger\hat{A}}\right]$ is the trace norm of $\hat{A}$ \cite[Def. 9.1.1]{wilde16quantumit2ed}. This implies our system is covert if $\frac{1}{4} \left\|\hat{\rho}_1^{W^n}-\left(\hat{\rho}_0^W\right)^{\otimes n}\right\|_1 \le \delta$. 

The trace distance is often mathematically unwieldy. Quantum relative entropy (QRE), $D\left(\hat{\rho}\middle\|\hat{\sigma}\right) = \operatorname{tr}[\hat{\rho}\log\hat{\rho} - \hat{\rho}\log\hat{\sigma}]$, is commonly employed in covertness analysis \cite{bash15covertbosoniccomm, bullock20discretemod, gagatsos20codingcovcomm}, since it is additive over product states and upper bounds the trace distance via the quantum Pinsker's inequality \cite[Th. 11.9.1]{wilde16quantumit2ed}:
\begin{align}
\frac{1}{4}\left\|\hat{\rho}_1^{W^{n}}-\left(\hat{\rho}_0^W\right)^{\otimes n}\right\|_1 \leq \sqrt{\frac{1}{8} D\left(\hat{\rho}_1^{W^{n}}\middle\|\left(\hat{\rho}_0^W\right)^{\otimes n}\right)}.\label{eq:pinskers}
\end{align}
We employ the \emph{right-hand side} (r.h.s.) of \eqref{eq:pinskers} rather than the left-hand side (l.h.s.) as our covertness criterion, as is common in both the classical \cite{bloch15covert, wang15covert} and quantum \cite{bullock20discretemod, gagatsos20codingcovcomm, azadeh16quantumcovert-isitarxiv, bullockCovertCommunicationClassicalQuantum2023} analyses.
That is, we upper bound the r.h.s.~of \eqref{eq:pinskers} by $\delta$.

Alice and Bob ensure the covert communication by randomly selecting the transmission rounds they use via $n$ flips of a biased coin, with the probability of heads $q$ to be determined later. If the $i^{\text{th}}$ flip is heads, then the $i^{\text{th}}$ round is selected. 
%
The chosen dual-rail systems constitute the classical pre-shared secret in Fig.~\ref{fig:system}.
A quantum error correction code (QECC) is used, with transmissions taking place only in the selected rounds.
This procedure, channel parameters, value of $q$, the QECC, and the time of transmission are known to Willie.

As Willie's channel from Alice is entanglement breaking, the state he observes is a classical superposition of product states given by
\begin{align}
     \hat{\rho}_1^{W^n} = \sum_{ \ket{m}\in\mathcal{M}} p\left(\ket{m}\right)\bigotimes_{i=1}^{n}\hat{\rho}_{1,i}^W(m), \label{eq:willie-n-expansion}
\end{align}
where $\mathcal{M}$ is the set of messages, 
and $\sum_{\ket{m}\in\mathcal{M}} p\left(\ket{m}\right) = 1$. Furthermore, 
\begin{align}
    \hat{\rho}_{1,i}^W (m)&= (1-q)\hat{\rho}_0^{W} + q \hat{\rho}_{i,m}^W,\label{eq:rho1i}
\end{align}
where  $\hat{\rho}_{i,m}^W$ is Willie's two-mode state when Alice transmits a (possibly mixed) state with density matrix
\begin{align}
    \hat{\rho}_{i,m}^A = \begin{pmatrix} |\alpha(i,m)|^2 & \gamma(i,m) \\ \gamma^\ast(i,m)& |\beta(i,m)|^2 \end{pmatrix},
\end{align}
in the logical basis with arbitrary coefficients $\alpha(i,m)$, $\beta(i,m)$, and $\gamma(i,m)$ depending on the QECC.
Then, using \eqref{eq:willie-n-expansion} yields the following upper bound on the QRE in \eqref{eq:pinskers}:
\begin{IEEEeqnarray}{rCl}
\IEEEeqnarraymulticol{3}{l}{D\left( \hat{\rho}_1^{W^n} \middle\| \left(\hat{\rho}_0^W\right)^{\otimes n} \right)}\IEEEnonumber \\&\le& \sum_{\ket{m}\in\mathcal{M}} p\left(\ket{m}\right) D\left( \bigotimes_{i=1}^{n}\hat{\rho}_{1,i}^W(m)\middle\|\left(\hat{\rho}_0^W\right)^{\otimes n}\right)  \label{eq:QRE-convexity}\\
    &=& \sum_{\ket{m}\in\mathcal{M}} p\left(\ket{m}\right) \sum_{i=1}^{n} D\left( \hat{\rho}_{1,i}^W(m)\middle\|\left(\hat{\rho}_0^W\right)^{\otimes n}\right)  \label{eq:QRE-additivity}\\
&\le& \sum_{\ket{m}\in\mathcal{M}} p\left(\ket{m}\right)  q^2\sum_{i=1}^{n}   D_{\chi^2}\left(\hat{\rho}_{i,m}^{W} \middle\| \hat{\rho}_0^W\right), \label{eq:chi2-bound}
\end{IEEEeqnarray}
where \eqref{eq:QRE-convexity} and \eqref{eq:QRE-additivity} follow from the convexity \cite[Corollary 11.9.2]{wilde16quantumit2ed} and additivity \cite[Ex.~11.8.7]{wilde16quantumit2ed} properties of the QRE. Lastly, \eqref{eq:chi2-bound} is by \cite[Lemma 1]{bullockCovertCommunicationClassicalQuantum2023},
where the quantum $\chi^2$-divergence \cite{temme2010chi2} between two states $\hat{\rho}$ and $\hat{\sigma}$ is:
\begin{align}
    D_{\chi^2}\left(\hat{\rho}\middle\|\hat{\sigma}\right) = \operatorname{tr}\left[(\hat{\rho}-\hat{\sigma})^2 \hat{\sigma}^{-1}\right] = \operatorname{tr}[\hat{\rho}^2\hat{\sigma}^{-1}] - 1, \label{eq:chisquare}
\end{align}
with the second equality due to the cyclic property of the trace and the fact that the trace of a quantum state is unity.
The following yields a bound on $D_{\chi^2}\left(\hat{\rho}_{i,m}^{W} \middle\| \hat{\rho}_0^W\right)$:

\begin{lemma}\label{lemma:ccov} When Alice transmits an arbitrary quantum state with the following density operator in the logical basis
\begin{align}
    \hat{\rho}^A = \begin{pmatrix} |\alpha|^2 & \gamma \\ \gamma^\ast& |\beta|^2 \end{pmatrix}\label{eq:alice-arb-input}
\end{align}
that is encoded in dual-rail qubit basis, the corresponding output at Willie $\hat{\rho}^W$ satisfies:
\begin{align}
    D_{\chi^2}\left(\hat{\rho}^W\middle\|\left(\hat{\rho}_0^W\right)^{\otimes 2}\right) &\leq \frac{(1-\eta)^2}{\eta \bar{n}_{\textrm{B}} (1+\eta \bar{n}_{\textrm{B}})} \label{eq:chisquared-qubit}.
\end{align}
\end{lemma}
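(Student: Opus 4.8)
The plan is to reduce the two‑mode $\chi^2$‑divergence to a handful of single‑mode traces by exploiting linearity of the channel and phase‑covariance of the thermal‑loss channel to Willie, and then to evaluate those traces explicitly in the Fock basis. Write $\hat{\sigma}\equiv\left(\hat{\rho}_0^W\right)^{\otimes2}=\hat{\rho}_M\otimes\hat{\rho}_M$ with $M\equiv\eta\bar{n}_{\textrm{B}}$, and let $\mathcal{W}$ denote the single‑mode Alice‑to‑Willie channel induced by \eqref{eq:bsmodal-willie}, a thermal‑loss channel of transmittance $1-\eta$ with environment photon number $\bar{n}_{\textrm{B}}$, so that $\mathcal{W}(\ket{0}\bra{0})=\hat{\rho}_M$. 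Using $D_{\chi^2}(\hat{\rho}\|\hat{\sigma})=\tr[\hat{\rho}^2\hat{\sigma}^{-1}]-1$ from \eqref{eq:chisquare}, I would split $\hat{\rho}^A$ in \eqref{eq:alice-arb-input} into its diagonal part $\hat{\rho}^A_{\rm d}=|\alpha|^2\ket{01}\bra{01}+|\beta|^2\ket{10}\bra{10}$ and its coherence part $\hat{\rho}^A_{\rm c}=\gamma\ket{01}\bra{10}+\gamma^\ast\ket{10}\bra{01}$. Since the channel acts mode‑wise, $\hat{\rho}^W=(\mathcal{W}\otimes\mathcal{W})(\hat{\rho}^A)$ is the corresponding sum of tensor products of the single‑mode outputs $\hat{\rho}_M$, $\hat{\tau}_1\equiv\mathcal{W}(\ket{1}\bra{1})$, $\hat{\xi}\equiv\mathcal{W}(\ket{0}\bra{1})$, and $\hat{\xi}^\dagger=\mathcal{W}(\ket{1}\bra{0})$.

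Next I would use that $\mathcal{W}$ is phase‑covariant, hence preserves the grading of operators by photon‑number difference $\Delta n$, while $\hat{\sigma}^{-1}$ is Fock‑diagonal ($\Delta n=0$ in each mode). Expanding $(\hat{\rho}^W)^2$ and tracing against $\hat{\sigma}^{-1}$, every term whose net $\Delta n$ is nonzero in either mode vanishes; this annihilates all cross terms between the diagonal and coherence outputs as well as the ``doubly off‑diagonal'' coherence squares, leaving only the diagonal‑squared contribution plus a single surviving coherence cross‑term. Using $\tr[\hat{\rho}_M^2\hat{\rho}_M^{-1}]=\tr[\hat{\rho}_M\hat{\tau}_1\hat{\rho}_M^{-1}]=1$ and $|\alpha|^2+|\beta|^2=1$, I expect this to collapse to
\[
D_{\chi^2}\!\left(\hat{\rho}^W\middle\|\hat{\sigma}\right)=(B-1)\left(1-2|\alpha|^2|\beta|^2\right)+2|\gamma|^2 FG,
\]
where $B\equiv\tr[\hat{\tau}_1^2\hat{\rho}_M^{-1}]$, $F\equiv\tr[\hat{\xi}\hat{\xi}^\dagger\hat{\rho}_M^{-1}]$, and $G\equiv\tr[\hat{\xi}^\dagger\hat{\xi}\hat{\rho}_M^{-1}]$.

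The third step is to compute these single‑mode ingredients. I would obtain $\hat{\tau}_1$ and $\hat{\xi}$ either by expanding the beamsplitter action \eqref{eq:bsmodal-willie} on Fock states and tracing out Bob, or more slickly from characteristic functions: one finds $\hat{\tau}_1$ Fock‑diagonal with $\langle k|\hat{\tau}_1|k\rangle=t_k(M)-(1-\eta)\frac{d}{dM}t_k(M)$, and $\hat{\xi}=\sum_k c_k\ket{k}\bra{k+1}$ with $|c_k|^2\propto(k+1)\left(\tfrac{M}{1+M}\right)^{2k}$. Substituting into $B$, $F$, $G$ reduces each to a geometric/arithmetic‑geometric series in $x=M/(1+M)$; summing these I expect $B-1=\frac{(1-\eta)^2}{M(1+M)}$ and, crucially, the cancellation $FG=\frac{(1-\eta)^2}{M(1+M)}=B-1$.

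Finally, since positivity $\hat{\rho}^A\succeq0$ forces $|\gamma|^2\le|\alpha|^2|\beta|^2$, the bracket obeys $1-2|\alpha|^2|\beta|^2+2|\gamma|^2\le1$, giving $D_{\chi^2}\le B-1=\frac{(1-\eta)^2}{\eta\bar{n}_{\textrm{B}}(1+\eta\bar{n}_{\textrm{B}})}$, with equality for every pure qubit. The main obstacle is the third step: correctly computing the off‑diagonal coherence output $\hat{\xi}$ and then evaluating the Fock‑basis series so that the delicate identity $FG=B-1$ emerges (rather than merely an inequality). I would also verify convergence throughout, since $\hat{\rho}_M^{-1}$ is unbounded, checking that the decay of $|c_k|^2$ and of $\langle k|\hat{\tau}_1|k\rangle$ keeps each trace finite.
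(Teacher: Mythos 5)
Your proposal is correct and arrives at the same closed form as the paper, but through a genuinely cleaner decomposition. The paper computes the full two-mode anti-normally ordered characteristic function of $\hat{\rho}^W$, inverts it with an operator Fourier transform (polar-coordinate integrals yielding Laguerre polynomials), and splits the resulting tri-diagonal two-mode operator into band operators $\hat{A},\hat{B},\hat{D}$, of which only $\hat{D}^2$, $\hat{A}\hat{B}$, $\hat{B}\hat{A}$ survive the trace against the diagonal $\left(\left(\hat{\rho}_0^{W}\right)^{-1}\right)^{\otimes 2}$. You instead factor the input into diagonal and coherence parts, push each through the mode-wise channel to obtain tensor products of the four single-mode outputs $\hat{\rho}_M$, $\hat{\tau}_1$, $\hat{\xi}$, $\hat{\xi}^\dagger$ (with $M\equiv\eta\nb$), and invoke phase covariance to kill all cross terms --- the same mechanism the paper extracts from Fock-state orthogonality, but stated as a symmetry, which collapses the whole calculation to the three single-mode scalars $B$, $F$, $G$. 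Your key identities check out: from the paper's $W_2$ one finds $F=(1-\eta)/(1+\eta\nb)$ and $G=(1-\eta)/(\eta\nb)$, so $FG=(1-\eta)^2/(\eta\nb(1+\eta\nb))$, while $B-1$ equals $(1-\eta)^2$ times the Fisher information of the geometric distribution $t_g(\eta\nb)$, which is the same quantity; hence your $D_{\chi^2}=(B-1)\left(|\alpha|^4+|\beta|^4+2|\gamma|^2\right)$ agrees with the paper's appendix derivation (the exponent $4$ on $|\gamma|$ displayed in \eqref{eq:SummedTraceTogether} and \eqref{eq:chi-square-f3} is evidently a typo for $2$: the paper's own cross terms $a_{nm}b_{n+1,m-1}$ scale as $|\gamma|^2$, and equality at $\gamma=\alpha\beta^{*}$ requires it). Your concluding step --- $|\gamma|^2\le|\alpha|^2|\beta|^2$ from positivity, hence $D_{\chi^2}\le B-1$ with equality for pure qubits --- is identical to the paper's. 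One harmless slip: the diagonal of $\hat{\tau}_1$ should read $t_k(M)+(1-\eta)\frac{d}{dM}t_k(M)$ rather than with a minus sign, but since $\sum_k \frac{d}{dM}t_k(M)=0$ only the squared derivative term contributes to $B$ and the value is unaffected; your convergence concern is likewise benign, as each summand ratio decays geometrically in $M/(1+M)$.
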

Here we provide a proof sketch, with the full proof deferred to \ref{sec:proof-of-lemma3}.


\begin{IEEEproof}[Proof (sketch)] 
Our first challenge is to determine $\hat{\rho}^W$. First, we find its density-operator representation. The anti-normally ordered characteristic function completely defines a quantum state $\hat{\rho}$, and, for a two-mode state, is given by:
\begin{align}
\chi_A^{\hat{\rho}}(\zeta_1,\zeta_2) &= \tr\left[\hat{\rho}e^{-\zeta_1^* \hat{a}_1}e^{\zeta_1 \hat{a}_1^{\dagger}}e^{-\zeta_2^* \hat{a}_2}e^{\zeta_2 \hat{a}_2^{\dagger}}\right],\label{eq:charfunc}
\end{align}
where $\zeta_i \in \mathbb{C}$ and $\hat{a}_i, \hat{a}_i^\dagger$ are modal annihilation and creation operators for $i=1,2$ \cite{weedbrook12gaussianQIrmp}. Using the expression for $\hat{w}$ in \eqref{eq:bsmodal-willie},
\begin{align}
\chi_A^{\hat{\rho}^W}(\zeta_1,\zeta_2) &=  \chi_A^{\hat{\rho}^{A}}\left(\sqrt{1-\eta}\zeta_1,\sqrt{1-\eta}\zeta_2\right) \notag \\ 
&\phantom{=}\times\chi_A^{\hat{\rho}^E}\left(\sqrt{\eta}\zeta_1,\sqrt{\eta}\zeta_2\right),
\end{align}
where $\chi_A^{\hat{\rho}^{A}}(\cdot)$ and $\chi_A^{\hat{\rho}^{E}}(\cdot)$ are the characteristic functions for Alice's input state \eqref{eq:alice-arb-input} and the thermal state. The expression for $\chi_A^{\hat{\rho}^{E}}(\cdot)$ is well known \cite[Sec.~7.4.3.2]{orszag16quantumotpics}. 
$\chi_A^{\hat{\rho}^{A}}(\cdot)$ is derived by using \eqref{eq:alice-arb-input} in \eqref{eq:charfunc} and expanding the exponentials:
\begin{align}
    \chi_A^{\hat{\rho}^W}(\zeta_1,\zeta_2) &=  e^{-(1+\eta \bar{n}_{\textrm{B}})(|\zeta_1|^2+|\zeta_2|^2)}\left[1- (1-\eta)\left(|\alpha|^2||\zeta_2|^2 \right . \right . \notag \\
    &\phantom{=} \left . \left .\vphantom{|\beta|^2}   
     + |\beta|^2|\zeta_1|^2+\gamma\zeta_1\zeta_2^* +\gamma^*\zeta_1^*\zeta_2\right)  \right ].\label{eq:charfuncwillie-simplified}
\end{align}

A quantum state $\hat{\rho}^W$ and its characteristic function $\chi_A^{\hat{\rho}^W}(\cdot)$ are related via the operator Fourier transform \cite{weedbrook12gaussianQIrmp}:
\begin{align}
    \hat{\rho}^W &= \iint \frac{d^2\zeta_1}{\pi}\frac{d^2\zeta_2}{\pi}\chi_A^{\hat{\rho}^W} e^{\zeta_2\hat{w}_2^\dagger} e^{\zeta_1\hat{w}_1^\dagger} e^{-\zeta_1^*\hat{w}_1} e^{-\zeta_2^*\hat{w}_2}, \label{eq:fourier-transform}
\end{align}
where the integrals are over the complex planes for $\zeta_1$ and $\zeta_2$. This allows $\hat{\rho}^W$ to be expressed in the Fock basis with the elements $p_{f,g,f^\prime,g^\prime}= \langle fg | \hat{\rho}^W | f^\prime g^\prime \rangle$ for $f,g,f^\prime,g^\prime \in \mathbb{N}_0$.

The integrals in \eqref{eq:fourier-transform} are evaluated in polar coordinates with expansions of the exponentials that include annihilation and creation operators. Details are in \ref{sec:proof-of-lemma3}. Due to the orthogonality of Fock states for the $|\alpha|^2$ and $|\beta|^2$ contributions, the only non-zero terms occur when $f=f^\prime$ and $g=g^\prime$, defining the main diagonal of the density operator. The $\gamma$-contribution terms are non-zero when $f'=f+1$ and $g'=g-1$ from an off-by-one exponential in integration over the corresponding polar coordinates. The $\gamma^*$-contributions follow similarly for $f'=f-1$ and $g'=g+1$. The Fourier transform in \eqref{eq:fourier-transform}  yields:
\begin{align}
    \hat{\rho}^W &= \sum_{g=0}^\infty \sum_{f=0}^\infty  \left(|\alpha|^2 W_1(f, g) + |\beta|^2 W_1(g, f)\right) |fg\rangle\langle fg| \notag \\
    &\hphantom{= \sum_{g=0}^\infty \sum_{f=0}^\infty} + \gamma W_2(g,f)|fg\rangle\langle f+1,g-1| \notag \\
    &\hphantom{= \sum_{g=0}^\infty \sum_{f=0}^\infty} + \gamma^* W_2(f,g) |fg\rangle\langle f-1,g+1|  \label{eq:SumTrace}
\end{align}
where
\begin{align}
W_1(f, g) &=  \left( \frac{(\eta \bar{n}_{\textrm{B}})^g}{(1+\eta \bar{n}_{\textrm{B}})^{g+1}} - \frac{(1-\eta)(\eta \bar{n}_{\textrm{B}}-g)(\eta \bar{n}_{\textrm{B}})^{g-1}}{(1+\eta \bar{n}_{\textrm{B}})^{g+2}} \right) \notag \\ 
     &\times \frac{(\eta \bar{n}_{\textrm{B}})^f}{(1+\eta \bar{n}_{\textrm{B}})^{f+1}}, \notag \\ 
     W_2(f,g)&=\frac{(1-\eta)(\eta \bar{n}_{\textrm{B}})^{g+f-1}}{(1+\eta \bar{n}_{\textrm{B}})^{g+f+3}}\sqrt{f(g+1)}.
\end{align}
Thus, $\hat{\rho}^W$ is a tri-diagonal operator as it is defined by $|fg\rangle\langle fg|$, $|fg\rangle\langle f-1,g+1|$, and $|fg\rangle\langle f+1,g-1|$. We obtain $(\hat{\rho}^W)^2$ by assigning each diagonal to operators $\hat{A}$, $\hat{B}$, $\hat{C}$ and computing $(\hat{\rho}^W)^2 = \left(\hat{A}+\hat{B}+\hat{C}\right)^2$.

Now, the density operator for the two-mode thermal state $\left(\hat{\rho}_0^{W}\right)^{\otimes 2}$ received by Willie when Alice is silent is \cite{weedbrook12gaussianQIrmp}:
\begin{align}
\left(\hat{\rho}_0^{W}\right)^{\otimes 2} = \sum_{f=0}^{\infty}\sum_{g=0}^{\infty} t_g(\eta\bar{n}_{\textrm{B}}) t_f(\eta\bar{n}_{\textrm{B}}) |fg\rangle\langle fg|, \label{eq:innocentWillie}
\end{align}
where $t_f(\eta\bar{n}_{\textrm{B}})$ and $t_g(\eta\bar{n}_{\textrm{B}})$ are defined in \eqref{eq:thermCoefficient}. Since \eqref{eq:innocentWillie} is a diagonal operator, its inverse is also diagonal. Calculating $D_{\chi^2}\left(\hat{\rho}^W\middle\|\left(\hat{\rho}_0^{W}\right)^{\otimes 2}\right) = \tr\left[(\hat{\rho}^W)^2 \left((\hat{\rho}_0^{W})^{-1}\right)^{\otimes 2}\right] - 1$ reduces to multiplying the diagonal elements of $(\hat{\rho}^W)^2$ and $\left((\hat{\rho}_0^{W})^{-1}\right)^{\otimes 2}$, and summing the results. This yields:
\begin{align}
&\tr\left[(\hat{\rho}^W)^2 \left((\hat{\rho}_0^{W})^{-1}\right)^{\otimes 2}\right] = \left[\left((1-\eta)^2+\eta\nb(1+\eta\nb)\right) \right .\notag \\
&\phantom{=}\times (|\alpha|^4+|\beta|^4) + 2\left(|\alpha|^2|\beta|^2\eta\nb(1+\eta\nb) \right . \notag \\
&\phantom{=}\left.\left.+ (1-\eta)^2|\gamma|^4\right)\right] /(\eta\nb(1+\eta\nb)) \label{eq:SummedTraceTogether}.
\end{align}
A pure-state logical qubit input with $|\gamma| = |\alpha\beta|$ maximizes \eqref{eq:SummedTraceTogether}, yielding the lemma.
\end{IEEEproof}


Lemma~\ref{lemma:ccov} upper-bounds $D_{\chi^2}\left(\hat{\rho}_{i,m}^{W} \middle\| \hat{\rho}_0^W\right)$ independently of $\alpha$, $\beta$, and $\gamma$.
Combining it with \eqref{eq:pinskers} and \eqref{eq:chi2-bound} yields:
\begin{align}
\sqrt{\frac{1}{8} D\left(\hat{\rho}_1^{W^{n}}||\left(\hat{\rho}_0^W\right)^{\otimes n}\right)} & \leq \frac{q(1-\eta)\sqrt{n}}{2\sqrt{2\eta \bar{n}_{\textrm{B}} (1+\eta \bar{n}_{\textrm{B}})}}.\label{eq:traceinequality2}
\end{align}
Therefore, the covertness requirement is maintained if $q \le \frac{2c_{\text{cov}}\delta}{\sqrt{n}}$, where the covertness constant is:
\begin{align}
    c_{\text{cov}} &= \frac{\sqrt{2\eta \bar{n}_{\textrm{B}} (1+\eta \bar{n}_{\textrm{B}})}}{(1-\eta)}. \label{eq:ccov}
\end{align}
Note that we employ the same constant $c_{\text{cov}}$ as in \cite[Eq.~(2)]{bullock20discretemod}.

\subsection{Reliability Analysis and Achievability} \label{sec:reliability}

Let $M(n)$ be the number of qubits transmitted covertly in $n$ channel uses. Denote by $[x]^+=\max(x,0)$. The following theorem characterizes the lower bound on $E[M(n)]$, where the expectation is over the biased random coin flips used to select the dual-rail systems in Section \ref{sec:covertnessanalysis}:

\begin{theorem}[Achievability] \label{thm:lower-bound}
$E[M(n)]\geq{2\sqrt{n}c_{\mathrm{cov}}R\delta}$ qubits can be transmitted reliably and covertly over $n$ uses of the lossy thermal-noise bosonic channel, where $c_\mathrm{cov}$ is in \eqref{eq:ccov}, and $\delta$ is the covertness constraint. $R\geq\left[1-H(\vec{p})\right]^+$ is the constant achievable rate of reliable qubit transmission per round, where $\vec{p}=\left[1-\frac{3p}{4},\frac{p}{4},\frac{p}{4},\frac{p}{4}\right]$, $p=1-\frac{\eta}{(1+(1-\eta)\nb)^4}$, and $H(\vec{p})=-\sum_{p_i\in\vec{p}} p_i\log(p_i)$ is the Shannon entropy. 
\end{theorem}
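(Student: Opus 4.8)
The plan is to split the claim into a covertness budget and a reliability argument and then multiply the two. The covertness analysis of Section~\ref{sec:covertnessanalysis} already caps the per-round transmission probability at $q\le 2c_{\mathrm{cov}}\delta/\sqrt n$. Because the active rounds are chosen by $n$ independent biased coin flips and Bob learns the active set from the shared secret, the number of active rounds $K$ is $\mathrm{Binomial}(n,q)$ with $E[K]=nq$. It therefore suffices to show that each active round realizes a fixed qubit channel admitting a reliable rate $R\ge[1-H(\vec p)]^+$ per use; the expected qubit count is then $E[K]\,R=nqR$, and evaluating at the covertness boundary $q=2c_{\mathrm{cov}}\delta/\sqrt n$ gives $E[M(n)]\ge 2\sqrt n\,c_{\mathrm{cov}}R\delta$.

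The crux is to identify this effective qubit channel. First I would compute Bob's received two-mode state $\hat\rho^B$ for the arbitrary dual-rail input \eqref{eq:alice-arb-input}, reusing the characteristic-function and operator-Fourier-transform machinery of Lemma~\ref{lemma:ccov} but with Bob's modal relation $\hat b=\sqrt\eta\,\hat a+\sqrt{1-\eta}\,\hat e$ from \eqref{eq:bsmodal-willie} in place of the Willie relation $\hat w$ used there. This interchanges the roles of $\eta$ and $1-\eta$ and replaces the effective thermal mean by $(1-\eta)\nb$, producing the $\bigl(1+(1-\eta)\nb\bigr)$ factors that appear in $\vec p$ and $p$. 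Next I would apply Bob's projection $\hat\Pi=\ket{01}\bra{01}+\ket{10}\bra{10}$ onto the single-photon (dual-rail) subspace, renormalize on success, and substitute the maximally mixed qubit $\hat I/2$ on failure, as prescribed in Fig.~\ref{fig:system}. Extracting the probability of error-free transmission from the resulting Fock-basis expression yields the no-error weight $1-p=\eta/\bigl(1+(1-\eta)\nb\bigr)^4$, the photon-survival factor $\eta$ combining with the thermal vacuum-occupation factors to the quoted fourth power.

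With the post-measurement map in hand, I would argue it is (or may be symmetrized to) a depolarizing channel with Pauli distribution $\vec p=\bigl[1-\tfrac{3p}{4},\tfrac p4,\tfrac p4,\tfrac p4\bigr]$. Since logical $X$ (mode swap) and $Z$ (relative phase) are realizable on dual-rail qubits, Alice and Bob may twirl over the Pauli (indeed Clifford) group using shared randomness, which maps any qubit channel to the depolarizing channel $(1-p)\,\hat\rho+p\,\hat I/2$, i.e.\ to exactly the Pauli distribution $\vec p$, where $1-p$ is the probability that the qubit passes intact. The reliable rate then follows from the hashing bound for Pauli channels \cite[Secs.~23.3 and 24.4]{wilde16quantumit2ed}: random stabilizer coding achieves coherent information $R=[1-H(\vec p)]^+$ qubits per channel use with error vanishing as the blocklength grows. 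As $E[K]=nq=2\sqrt n\,c_{\mathrm{cov}}\delta\to\infty$, the active block is asymptotically long, so the hashing bound applies and delivers $\approx KR$ reliable qubits; averaging over $K$ gives the stated expectation.

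The main obstacle is the first computation: evaluating $\hat\rho^B$ in the Fock basis and carrying the projection and renormalization through to the clean closed form $1-p=\eta/(1+(1-\eta)\nb)^4$. As in Lemma~\ref{lemma:ccov}, the off-diagonal ($\gamma$) coherences complicate the bookkeeping, and one must check that after projection and twirling they contribute the \emph{symmetric} depolarizing weights rather than an asymmetric Pauli channel---equivalently, that a single error probability determines the effective channel. Verifying this symmetry, and confirming that the fourth power (rather than a naive square over the two modes) is correct, is the step most likely to require care.
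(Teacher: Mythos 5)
Your proposal is correct and follows essentially the same route as the paper's proof: the covertness budget $q\le 2c_{\mathrm{cov}}\delta/\sqrt{n}$ from Section~\ref{sec:covertnessanalysis} fixes the expected number of active rounds at $nq=2c_{\mathrm{cov}}\delta\sqrt{n}$; Bob's state is obtained from the Willie-state computation of Lemma~\ref{lemma:ccov} by swapping $\eta\leftrightarrow 1-\eta$; the projection onto the dual-rail subspace with maximally mixed substitution on failure reduces each active round to a depolarizing qubit channel with $1-p=\eta/(1+(1-\eta)\nb)^4$; and the hashing bound delivers $R\ge[1-H(\vec{p})]^+$. The one place you genuinely diverge is in justifying the depolarizing structure. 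The paper factors $1-p=(1-p_{\mathrm{fail}})(1-p^\prime)$, treating the projection failure and the residual in-subspace noise as two composed depolarizing channels and citing \cite[Appendix B]{kish2023comparison} for the latter, whereas you propose a one-shot Fock-basis computation backed by a Pauli/Clifford twirl. The twirl is a legitimate and arguably more self-contained alternative (the required shared randomness is already assumed), but note a small slip: after a Clifford twirl the depolarizing parameter is fixed by the \emph{entanglement fidelity} of the pre-twirl channel (via $p=\tfrac{4}{3}(1-F)$), not by the ``weight of the identity component'' unless the channel is already Pauli, so recovering the stated $p$ still requires verifying that the post-projection channel's entanglement fidelity equals $1-\tfrac{3p}{4}$ --- which is exactly the symmetry check the paper outsources to \cite{kish2023comparison}. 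With that caveat, the remaining steps (binomial active set, asymptotically long active block, averaging over the coin flips) match the paper's argument.
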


\begin{IEEEproof}
Alice and Bob pre-share a secret, determining the rounds to be used for transmission. Alice employs a random code from \cite[Secs.~23.3 and 24.4]{wilde16quantumit2ed} to encode the message. The expected number of rounds selected is $qn={2c_{\mathrm{cov}}\delta}{\sqrt{n}}$, per Section \ref{sec:covertnessanalysis}.

Bob projects the two-mode systems in each of the selected rounds into the subspace spanned by the dual-rail basis states in \eqref{eq:dual-rail-qubit}. The probability of projection failure is $p_\mathrm{fail} = 1-\langle 01|\hat{\rho}^B|01\rangle+\langle 10|\hat{\rho}^B|10\rangle = 1-\frac{2\nb(1+\nb)(1-\eta)^2+\eta}{(1+(1-\eta)\nb)^4}$ where $\hat{\rho}^B$ is described by \eqref{eq:SumTrace} with $\eta$ swapped for $1-\eta$ and vice versa, and arbitrary $\alpha$, $\beta$, and $\gamma$. When the projection is unsuccessful, Bob replaces the state with the maximally mixed state $\frac{\hat{\pi}}{2}$. This mimics a depolarizing channel $\hat{\rho}^B \to \hat{\rho}^{B}_\mathrm{proj} = (1-p_\mathrm{fail}) \hat{\rho}^B + p_\mathrm{fail}\frac{\hat{\pi}}{2}$ parameterized by $p_\mathrm{fail}$.

Furthermore, this projection allows one to treat the lossy thermal-noise channel as a depolarizing channel acting on $\hat{\rho}^{B}_\mathrm{proj}$ and parameterized by $p^\prime = \frac{2(1-\eta)^2\bar{n}_{\textrm{B}}(1+\bar{n}_{\textrm{B}})}{\eta+2(1-\eta)^2\bar{n}_{\textrm{B}}(1+\bar{n}_{\textrm{B}})}$ \cite[Appendix B]{kish2023comparison}. Then Bob's state, $\hat{\rho}^{\prime B}$, prior to decoding is given by  $\hat{\rho}^{B}_\mathrm{proj} \to \hat{\rho}^{\prime B} = (1-p^\prime)\hat{\rho}^{B}_\mathrm{proj} + p^\prime \frac{\hat{\pi}}{2}$, where
\begin{align}
\hat{\rho}^{\prime B} &= (1-p^\prime)\left((1-p_\mathrm{fail}) \hat{\rho}^B + p_\mathrm{fail}\frac{\hat{\pi}}{2}\right) + p^\prime \frac{\hat{\pi}}{2} \\
&= (1-p^\prime)(1-p_\mathrm{fail})\hat{\rho}^B+(p^\prime+(1-p^\prime)p_\mathrm{fail})\frac{\hat{\pi}}{2}\\
&= (1-p)\hat{\rho}^B+p\frac{\hat{\pi}}{2}, \label{eq:bob-depolarized-state}
\end{align}
with $p=p^\prime+(1-p^\prime)p_\mathrm{fail}=1-\frac{\eta}{(1+(1-\eta)\nb)^4}$ in \eqref{eq:bob-depolarized-state}. Thus, in each round, Bob's state is equivalent to Alice's state transmitted through a depolarizing channel parameterized by $p$. This channel is a Pauli channel parameterized by $\vec{p}=\left[1-\frac{3p}{4},\frac{p}{4},\frac{p}{4},\frac{p}{4}\right]$ \cite[Ex. 4.7.4]{wilde16quantumit2ed}. We complete the proof using the hashing bound \cite[Sec.~24.6.3]{wilde16quantumit2ed}.
\end{IEEEproof}

The following remarks are in order:
\subsubsection{Entanglement-breaking condition}
If the Alice-to-Willie channel is not naturally entanglement breaking, Alice may use Lemmas~\ref{lemma:ent-break-trans} or \ref{lemma:ent-break-noise} to break it.
Using Lemma~\ref{lemma:ent-break-trans} replaces $(1-\eta)$ with $\tau(1-\eta)$ in the expressions for $c_\mathrm{cov}$ and $\eta$ by $1-\tau(1-\eta)$ in $R$.  Using Lemma~\ref{lemma:ent-break-noise} replaces $\bar{n}_\mathrm{B}$ with $\bar{n}_\mathrm{B}+\bar{n}^\prime_\mathrm{B}$ in $c_\mathrm{cov}$ and $\bar{n}_\mathrm{B}+\bar{n}^{\prime\prime}_\mathrm{B}$ in $R$ where $\bar{n}^{\prime\prime}_\mathrm{B}= \frac{2(1-1/G)\eta}{1-\eta}$.

\subsubsection{Use of auxiliary covert classical channel} \label{rem:classical_comms}
Suppose that Alice and Bob have a covert full-duplex classical communication link. 
Let Alice prepare Bell states, sending one qubit of each state to Bob using the dual-rail basis on the rounds selected for transmission.
Bob's projection of his received state to the smaller dual-rail subspace in \eqref{eq:dual-rail-qubit} is probabilistic, with outcomes known to Bob. 
Bob communicates to Alice the indices of successful rounds over the covert classical full-duplex link, allowing hashing-based entanglement distillation \cite{bennet96qec} on these rounds. Alice can then teleport qubits to Bob using this distributed entanglement and the covert classical link, achieving
$R^\prime= (1-p_\mathrm{fail})\left(1-H(\vec{p^\prime})\right)$ where $\vec{p^\prime}=\left[1-\frac{3p^\prime}{4},\frac{p^\prime}{4},\frac{p^\prime}{4},\frac{p^\prime}{4}\right]$ and $p^\prime$ is the same as that in the proof of Theorem~\ref{thm:lower-bound}. The resulting expected number of reliably-transmissible covert qubits is plotted in Fig.~\ref{fig:bounds}.
The average number of classical bits that need to be covertly exchanged is $\propto \sqrt{n}$, making this scheme feasible under certain channel conditions.
However, we defer the characterization of covert classical communication link requirements to future work.


\subsubsection{Potential improvement of the bound on \texorpdfstring{$R$}{R} without classical communication} \label{rem:lower-bound}
When Bob fails to project a state to the basis defined by $\eqref{eq:dual-rail-qubit}$, he replaces it with a maximally mixed state. Hence, instead of an erasure error, it is treated as a random Pauli error and Bob throws away useful information that may aid in decoding. Indeed, it is known that any stabilizer code can correct up to twice
as many erasure errors as Pauli errors \cite[Sec.~III.A]{grasslCodesQuantumErasure1997}. Although the use of random codes and their analytical achievable rate $R$ has only been established for Pauli errors, the rates for large codes correcting erasure errors (such as the tree code \cite{varnavaLossToleranceOneWay2006}) can be computed numerically.

Finally, here, we restrict encoding to the finite-dimensional subspace spanned by dual-rail basis in \eqref{eq:dual-rail-qubit}. Bosonic codes may improve the rate $R$, as they take advantage of the entire infinite-dimensional space. However, the covertness requirement may diminish this advantage. For example, ideal Gottesman-Kitaev-Preskill (GKP) states require infinite energy, conflicting with covertness. Nevertheless, the trade-off should be studied.

\subsection{Converse}
\label{sec:converse}
The following provides an upper bound on $M(n)$:
\begin{theorem}[Converse] \label{thm:upper-bound}
$M(n) \le 2nC$ where 
\begin{align}
C &= \left[g\left(\frac{(G+1)\bar{n}_\mathrm{S}+\bar{G}}{2} \right)-g\left(\frac{\bar{G}(1+\bar{n}_\mathrm{S})}{2}\right)\right]^+, \label{eq:upperbound}
\end{align}
with $g(x)\equiv (1+x)\log(1+x)-x\log(x)$, $G=\frac{\eta}{\eta-(1-\eta)\nb/2}$, $\bar{G}=G-1$, and the mean photon number of Alice's input state for $c_{\rm cov}$ defined in \eqref{eq:ccov} is constrained by $\bar{n}_\textrm{S} \leq \frac{2c_{\mathrm{cov}}\delta}{\sqrt{n}}$.
\end{theorem}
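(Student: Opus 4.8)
The plan is to turn the reliability constraint into a quantum-capacity converse and then feed in the covert signal-power limit as an energy budget. First I would invoke the converse to the quantum channel coding theorem \cite{wilde16quantumit2ed}: any code transmitting $M(n)$ qubits reliably from Alice to Bob must obey $M(n)\le Q$, where $Q$ is the energy-constrained quantum capacity of the effective Alice-to-Bob channel over all the optical modes Alice occupies. Since this channel is the thermal attenuator with transmittance $\eta$ and it acts independently on each of the $2n$ modes (Section~\ref{sec:preliminaries}), and since the relevant upper-bounding functional of \cite{sharmaBoundingEnergyconstrainedQuantum2018} is additive over tensor-product channels, I would reduce the $2n$-mode bound to $2n$ copies of a single-mode bound $C$, giving $M(n)\le 2nC$. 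Working with an \emph{additive} upper bound is exactly what lets us avoid the generic superadditivity of quantum capacity: the functional we bound with is additive even though the capacity itself need not be.

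Second, I would supply the energy constraint from covertness. The covert signal-power limit of \cite{bullock20discretemod}, expressed through the covertness constant $c_{\mathrm{cov}}$ in \eqref{eq:ccov}, caps the mean photon number Alice may place in each mode at $\bar{n}_{\mathrm{S}}\le \frac{2c_{\mathrm{cov}}\delta}{\sqrt{n}}$. This limit holds for \emph{any} encoding rather than only dual-rail, which is what makes the converse general. Because the single-mode capacity bound is monotone nondecreasing in the input photon number, evaluating it at the largest admissible $\bar{n}_{\mathrm{S}}$ gives a valid per-mode upper bound, and summing over the $2n$ modes produces $M(n)\le 2nC$.

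Third, I would instantiate the single-mode bound for the thermal attenuator. Adapting the argument of \cite{sharmaBoundingEnergyconstrainedQuantum2018}, Bob's channel is compared to a quantum-limited amplifier of effective gain $G=\frac{\eta}{\eta-(1-\eta)\nb/2}$ with $\bar{G}=G-1$, and the resulting coherent-information-type upper bound is written as a difference of thermal-state entropies $g(\cdot)$ evaluated at the output and complementary-output photon numbers. Substituting the covert-maximal $\bar{n}_{\mathrm{S}}$ and taking the positive part then yields precisely \eqref{eq:upperbound}, with the factor $1/(\sqrt{n})$ in $\bar{n}_{\mathrm{S}}$ forcing $C=O(1/\sqrt{n})$ and hence the $M(n)=O(\sqrt{n})$ scaling that matches achievability.

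The step I expect to be the main obstacle is the third one: reproducing the closed form for $C$, and in particular showing that the effective gain $G$ and the two entropy arguments $\frac{(G+1)\bar{n}_{\mathrm{S}}+\bar{G}}{2}$ and $\frac{\bar{G}(1+\bar{n}_{\mathrm{S}})}{2}$ emerge from the comparison. The difficulty is that a thermal attenuator is in general neither degradable nor anti-degradable, so the bound cannot come from a direct coherent-information evaluation; it requires the indirect reduction to an amplifier and careful tracking of the covariance-matrix parameters, which is where the factors of $1/2$ and the specific form of $G$ originate. A secondary item to nail down is that the single-mode capacity bound is monotone in $\bar{n}_{\mathrm{S}}$ over the admissible range, so that replacing the true per-mode photon numbers by their covert upper limit can only loosen, never violate, the inequality; should the covert budget instead be read as an average over modes, concavity of the bound would additionally be needed to justify the uniform per-mode allocation, and this is worth verifying explicitly.
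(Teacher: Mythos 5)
Your proposal follows essentially the same route as the paper: a quantum-capacity converse over the $2n$ modes, decomposition of the thermal attenuator into a quantum-limited amplifier of gain $G$ composed with a pure-loss channel, data processing to drop one factor so that the degradable amplifier's additive capacity bound gives $M(n)\le 2nC$, and the covert photon-number budget $\bar{n}_\mathrm{S}\le 2c_{\mathrm{cov}}\delta/\sqrt{n}$ from \cite{bullock20discretemod} supplying the energy constraint. The one detail worth making explicit is which factor to discard: the paper keeps the amplifier and discards the pure-loss channel --- the opposite of the standard choice in \cite{sharmaBoundingEnergyconstrainedQuantum2018} --- since in the covert regime $\bar{n}_\mathrm{S}\to 0$ this gives an $O(\sqrt{n})$ bound rather than $O(\sqrt{n}\log n)$, and the closed form \eqref{eq:upperbound} is then taken directly from the LOCC-assisted amplifier-capacity bound of \cite{davis18constrainedcap}.
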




\begin{IEEEproof}[Proof] In Theorem \ref{thm:lower-bound}, we use the total of $2n$ modes of lossy thermal-noise bosonic channel, since we transmit dual-rail qubits. Hence, we employ the standard arguments from \cite[Sec.~24.5]{wilde16quantumit2ed} to obtain $M(n) \le C_{2n}^\prime$, where $C_{2n}^\prime$ is the channel quantum capacity (regularized coherent information) over these $2n$ uses.
Any lossy thermal-noise channel with transmittance $\eta$ and mean thermal photon number $\bar{n}_\textrm{B}$ can be decomposed into a quantum-limited amplifier with gain coefficient $G=\frac{\eta}{\eta-(1-\eta)\nb/2}$ followed by a pure-loss channel with transmittance $\eta^\prime = \eta/G$ \cite{garciapatron2012majorization, caruso2006weakdegradability}.
Discarding one of these channels and applying the data-processing inequality upper bounds $C_{2n}^\prime\leq C_{2n}$, where $C_{2n}$ is the quantum capacity of the remaining channel over $2n$ uses. Since both pure-loss and pure-input quantum-limited amplifier channels are degradable \cite{caruso2006degradability}, their coherent information is additive, and $C_{2n}=2nC$, where $C$ is the single-channel-use quantum capacity.
Usually, the amplifier channel is discarded (see, e.g.,
 \cite[Th.~16]{sharmaBoundingEnergyconstrainedQuantum2018}), since the resulting bound is tighter unless $\bar{n}_{\rm S}\to0$.
 Here, due to  constraint on $\bar{n}_{\rm S}$, we obtain a tighter bound by discarding the pure-loss\footnote{Discarding the amplifier instead yields a poor bound that is $\propto\sqrt{n}\log(n)$.} channel: $C$ in \eqref{eq:upperbound} is an upper bound on the quantum capacity of the amplifier channel with gain $G$ assisted by the arbitrary local
operations and classical communication (LOCC) \cite[Eqs. (115), (172)]{davis18constrainedcap}\cite{goodenough2016quantumrepeat}.

The energy-constrained capacity is required, as bounding the QRE on the r.h.s.~of \eqref{eq:pinskers} limits Alice's input state mean photon number $\bar{n}_{\rm S}$ per \cite[Thm. 1]{bullock20discretemod}. While this result\footnote{There is a typo in the short paragraph between Criterion 2 and Eq.~(4) in \cite{bullock20discretemod}: $\delta=\sqrt{\delta_{\rm QRE}}$ should be $\sqrt{8}\delta=\sqrt{\delta_{\rm QRE}}$. We apply this correction in deriving the constraint on $\bar{n}_{\rm S}$ in the statement of Theorem \ref{thm:upper-bound}.} is applied to the classical-quantum capacity in \cite{bullock20discretemod}, the theorem is general for any quantum state. 
\end{IEEEproof}

\begin{figure}[htb]
\centering
\includegraphics[width=8.5cm]{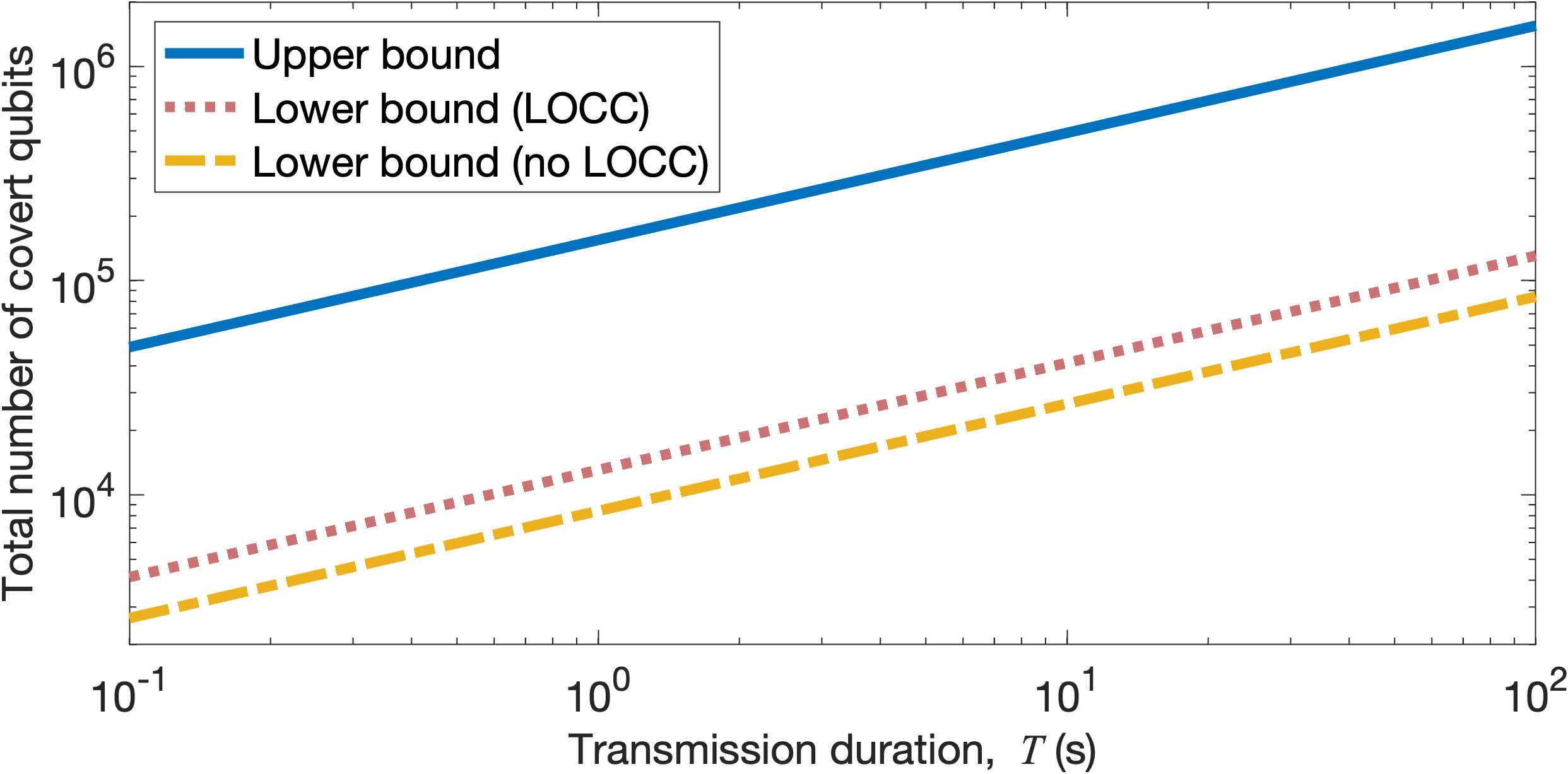}
\caption{Log-log plot for upper (solid blue line) and lower (dot-dashed red line) bounds given by Theorem~\ref{thm:upper-bound} and Theorem~\ref{thm:lower-bound} respectively for the total number of covert bits reliably transferred vs time in seconds. The orange dotted line is the lower bound discussed in remark~\ref{rem:classical_comms} of Section~\ref{sec:reliability} enabled by a two-way classical communication link. Here, transmittance $\eta=0.9$, the mean thermal photon number $\bar{n}_{\textrm{B}} = 0.12$, and the covertness parameter $\delta = 0.05$.}
\label{fig:bounds}
\end{figure}

Taylor series expansion of \eqref{eq:upperbound} around $\bar{n}_{\rm S}=0$ yields the SRL scaling of the upper bound in Theorem~\ref{thm:upper-bound}, matching that of the lower bound in Theorem~\ref{thm:lower-bound}.
However, a multiplicative gap exists between these bounds, as shown in Fig.~\ref{fig:bounds},
where we set the channel transmittance to $\eta = 0.9$, and $\bar{n}_{\textrm{B}}=0.12$. These parameters ensure that the Alice-to-Willie channel is entanglement breaking. 
Furthermore, we employ a modulation frequency of 100 MHz or $10^{8}$ modes/second with a covertness criterion of $\delta=0.05$. We suspect the looseness of the upper bound is due to the data processing argument, insofar as we disregard the pure-loss channel in the decomposition of the original channel. Indeed, a tighter bound may exist. However, deriving such a bound remains a difficult open problem as it requires analysis of the regularized coherent information of the channel \cite[Th.~24.3.1]{wilde16quantumit2ed} \cite{khatriInformationtheoreticAspectsGeneralized2020}.

\section{Conclusion and Discussion}

\label{sec:discussion}
We develop an achievable lower bound on the expected number $E[M(n)]$ of qubits that are covertly and reliably transmissible using dual-rail qubit encoding over the lossy thermal-noise bosonic channel (we defer  removing the expectation to future work).
We also provide a converse. 
Although we specifically address quantum communication rather than QKD, we expect our work to provide insight into the open questions in covert QKD \cite{arrazolaCovertQuantumCommunication2016,tahmasbi19covertqkd,tahmasbi20bosoniccovertqkd-jsait,tahmasbi20covertqkd}.

While both the upper and lower bounds in the converse and the achievability scale $\propto \sqrt{n}$, the gap between them is fairly large. This motivates improvement of the QECC capabilities and the upper bounds of the quantum capacity of the lossy thermal-noise bosonic channel, as noted in Sections \ref{sec:reliability} and \ref{sec:converse}.  
Furthermore, in our achievability analysis, we require that the Alice-to-Willie channel is entanglement breaking to simplify the mathematics. While Lemmas~\ref{lemma:ent-break-trans} and \ref{lemma:ent-break-noise} may be employed to ensure this condition, for atmospheric models such as MODTRAN \cite{berk06MODTRAN}, $R=0$ in Theorem~\ref{thm:lower-bound}. However, using classical covert channel can yield $R>0$. Further investigation of the necessity of entanglement-breaking condition is needed. 
Other covertness criteria, including bounding the trace norm on the l.h.s.~of \eqref{eq:pinskers} directly (as done in \cite{wangCharacterizationCovertCapacity2022} for classical-quantum channels), also need to be studied.



Practical aspects of achieving covert quantum communication have to be considered.  Our covertness scheme requires a substantial number of classical pre-shared secret bits; resolvability techniques from \cite{bloch15covert,bullockCovertCommunicationClassicalQuantum2023} should be adapted to reduce this burden. Additionally, here we assume that quantum states are generated for transmission on demand.  However, quantum processes are inherently random.  This stochasticity needs to be included in the calculation of channel use selection probability $q$.  Moreover, although randomness in state generation is generally considered an undesired characteristic of quantum information processing, here it might be exploited. For example, the quasi-probabilistic nature of heralded entangled photonic Einstein-Podolsky-Rosen (EPR) pair generation \cite{Chen} could be used to select a random subset of channel uses.

Finally, the framework provided in this manuscript can be applied to other qubit encodings and quantum channels. Encodings such as single-rail and GKP are enticing candidates due to their prevalence in the literature and error-correcting properties of the latter. In fact, it has been shown \cite{noh2018capacitybounds} that GKP qubits allow reliable quantum communication rates that differ only by a constant factor from the known upper bound on the quantum capacity of the pure-loss bosonic channel and perform well in the lossy thermal-noise bosonic channel setting. However, their energy requirements negatively impact covertness.
Indeed, practical codes enabling a physical realization of covert quantum communication should be investigated.
 
\section*{Acknowledgement}
The authors benefited from discussions with Christos Gagatsos, Brian Smith, Ryan Camacho, Michael Bullock, Narayanan Rengaswamy, Kenneth Goodenough, and Saikat Guha.

\vfill \clearpage

\appendices


\renewcommand{\thesection}{Appendix \Alph{section}}

\section{Induced Entanglement Breaking Channel} \label{sec:induced-entanglement-breaking}

\subsection{Proof of Lemma \ref{lemma:ent-break-trans}} \label{proof-lemma-ent-trans}
\begin{figure}[htb]
\centering
\includegraphics[width=5cm]{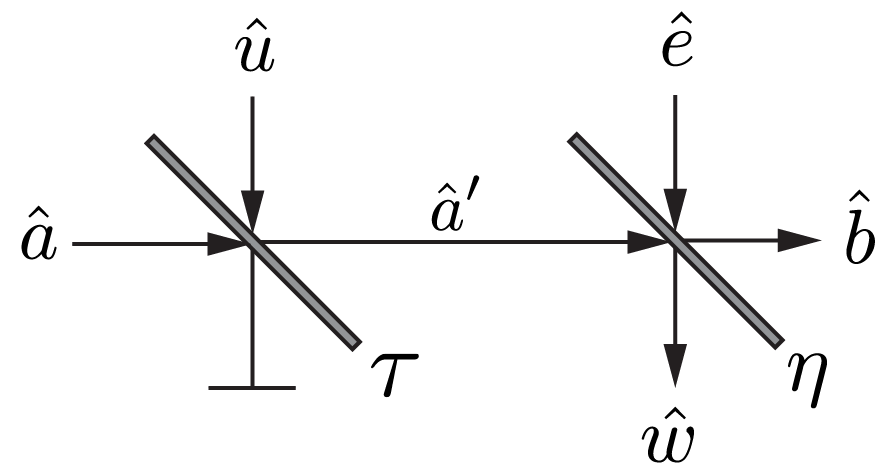}
\caption{A system Alice can employ to induce an entanglement breaking channel. Annihilation operators $\hat{a}, \hat{u}$ and $\hat{e}$ label Alice’s and the environment’s input modes, while $\hat{w}$ and $\hat{b}$ label Bob’s and warden Willie’s output modes. $\hat{a}^\prime$ labels an intermediate mode that is the output of the first beamsplitter with transmissivity $\tau$ and input into the beamsplitter with transmissivity $\eta$. Alice has control over the beamsplitter transmissivity $\tau$ and the input state at the mode $\hat{u}$. The beamsplitter with transmissivity $\eta$ and input state at $\hat{e}$ represent the original system channel in Fig.~\ref{fig:system} and are fixed.}
\label{fig:ent-break-loss}
\end{figure}

Alice's system is in Fig.~\ref{fig:ent-break-loss}. She inputs a state in mode $\hat{a}$ of a beamsplitter with transmissivity $\tau$. Its output passes through a second beamsplitter with reflectance $1-\eta$. Alice controls $\tau$ and its second input mode, $\hat{u}$, which is in a vacuum state. $\eta$ is fixed, as is the thermal state $\hat{\rho}_{\nb}$ of the input mode $\hat{e}$. 

The complementary bosonic channel characterized by $\eta$ and $\nb$ decomposes into a pure-loss bosonic channel followed by a quantum-limited amplifier \cite{garciapatron2012majorization, caruso2006weakdegradability, rosatiNarrowBoundsQuantum2018}
with gain coefficient $G=\eta\frac{\nb}{2}+1$.
It is followed by a pure-loss channel with reflectance $1-\eta^\prime = \frac{1-\eta}{G}$ \cite{garciapatron2012majorization, caruso2006weakdegradability}. Using Lemma~\ref{lemma:pure-loss} in \ref{app:usefullemmas}, the system in  Fig.~\ref{fig:ent-break-loss} is equivalent to a single beamsplitter with reflectance $1-\nu = \tau(1-\eta^\prime)$ followed by the amplifier with gain $G$. The modal relationships of a quantum-limited amplifier with gain $G$ and input modes $\hat{f}$, $\hat{g}$ and output modes $\hat{h}$, $\hat{i}$ are $\hat{h} = \sqrt{G}\hat{f}+\sqrt{G-1}\hat{g}$ and  $\hat{i} = \sqrt{G-1}\hat{f}+\sqrt{G}\hat{g}$. These allow  for the expansion of the characteristic function for the output state:
\begin{align}
    \chi_A^{\hat{\rho}^W}\left(\zeta\right) &=\chi_A^{\hat{\rho}^A}\left(\sqrt{G(1-\nu)}\zeta\right) \chi_A^{\hat{\rho}^V}\left(\sqrt{G\nu}\zeta\right) \notag \\
    &\phantom{=}\times\chi_A^{\hat{\rho}^G}\left(\sqrt{(G-1)}\zeta\right) \notag\\
    &=\chi_A^{\hat{\rho}^A}\left(\sqrt{\tau(1-\eta)}\zeta\right)  e^{-(G-\tau(1-\eta))|\zeta|^2} e^{-(G-1)|\zeta|^2}\notag \\ 
    &= \chi_A^{\hat{\rho}^A}\left(\sqrt{\tau(1-\eta)}\zeta\right) e^{-(\frac{\eta\nb}{1-\tau(1-\eta)}+1)(1-\tau(1-\eta))|\zeta|^2}\notag
\end{align}
yielding a complementary channel with reflectance $\tau(1-\eta)$ and contribution of thermal mean photon number from the environment of $\eta\nb$ at the output. When $\tau < \frac{\eta\nb}{1-\eta}$, this meets the entanglement breaking condition $\eta\nb > \tau(1-\eta)$ \cite{holevoEntanglementbreakingChannelsInfinite2008}.


\subsection{Proof of Lemma \ref{lemma:ent-break-noise}} \label{proof-lemma-ent-noise}
\begin{figure}[htb]
\centering
\includegraphics[width=7.3cm]{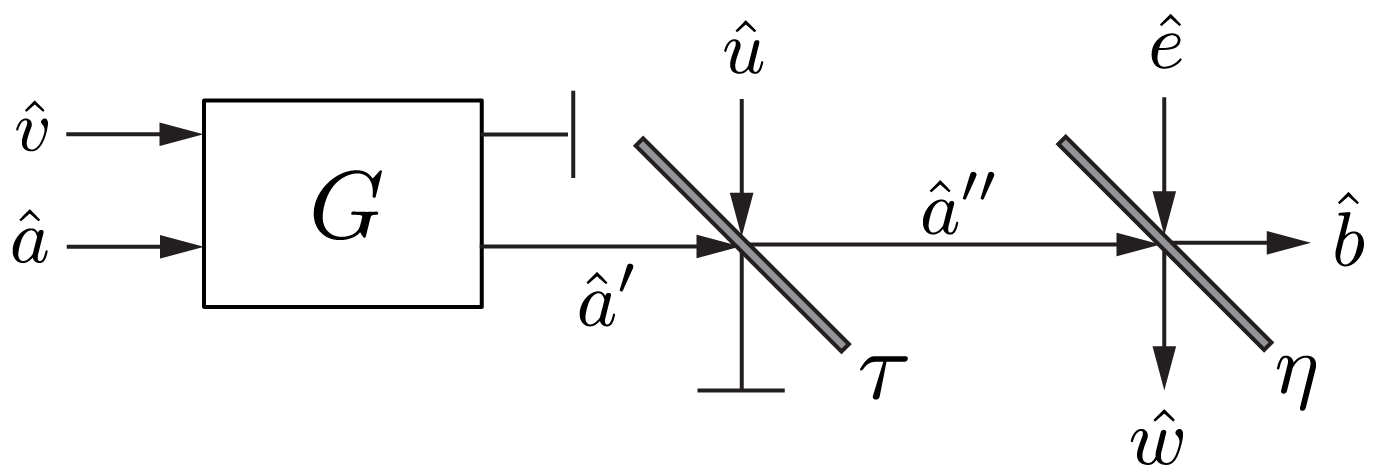}
\caption{A system that allows Alice to introduce additional arbitrary thermal noise in her channel to Willie. Annihilation operators $\hat{a}$ and $\hat{v}$ label Alice's input into the quantum-limited amplifier with gain coefficient $G$. The output of the amplifier is labeled by annihilation operator $\hat{a}^\prime$ and is input to a beamsplitter with tranmissivity $\tau$. The other port for the beamsplitter is labeled with $\hat{u}$. One output mode of the beamsplitter is traced out, while the other is labeled $\hat{a}^{\prime\prime}$. It is the input to a beamsplitter with transmissivity $\eta$ with a second input mode labeled $\hat{e}$, which represents the original channel in Fig.~\ref{fig:system} and is fixed. Alice has control of $G$, $\tau$, and the input states at modes $\hat{v}$ and $\hat{u}$ with her original qubit state input at $\hat{a}$. }
\label{fig:ent-break-noise}
\end{figure}

Alice utilizes the system devised in Fig.~\ref{fig:ent-break-noise} where she inputs her qubit state to a quantum-limited amplifier with gain coefficient $G$. The output of the amplifier is then fed to a beamsplitter of transmissivity $\tau$. Finally, the signal is passed through the channel modeled by a beamsplitter with transmissivity $\eta$. Alice has control over $G$, $\tau$, and the second input states to the amplifier and beamsplitter with tranmissivity $\tau$. $\eta$ and the environment input state at the mode labeled $\hat{e}$ are fixed. The input-output modal relationships of the system are
\begin{align}
\hat{a}^\prime &= \sqrt{G}\hat{a} + \sqrt{G-1}\hat{v},\quad
\hat{a}^{\prime\prime} = \sqrt{\tau}\hat{a}^\prime + \sqrt{1-\tau}\hat{u}\\
\hat{w} &= \sqrt{1-\eta}\hat{a}^{\prime\prime} -\sqrt{\eta}\hat{e} \\
 &= \sqrt{\tau(1-\eta)}\hat{a}^\prime + \sqrt{(1-\tau)(1-\eta)}\hat{u} -\sqrt{\eta}\hat{e} \\
 &= \sqrt{G\tau(1-\eta)}\hat{a} + \sqrt{(G-1)\tau(1-\eta)}\hat{v} \notag \\
& \phantom{=}+ \sqrt{(1-\tau)(1-\eta)}\hat{u} -\sqrt{\eta}\hat{e} \label{eq:ent-relationship}
\end{align}

To demonstrate the state at Willie is equivalent to the original channel with additional noise, we utilize the characteristic functions of the system. Without loss of generality, we consider the single-mode case. Thus, we need to show:
$\chi_A^{\hat{\rho}^W}\left(\zeta\right) =\chi_A^{\hat{\rho}^A}\left(\sqrt{1-\eta}\zeta\right) e^{-(1+ \eta(\nb + \nb^\prime))|\zeta|^2}$, where $\nb$ is the thermal mean photon number of the environment state and $\nb^\prime$ is the additional thermal mean photon number Alice introduces. The exponential is of the same form as the single mode version of \eqref{eq:char-func-e-def}. It follows then that
\begin{align}
    \chi_A^{\hat{\rho}^W}\left(\zeta\right) &= \chi_A^{\hat{\rho}^A}\left(\sqrt{G\tau(1-\eta)}\zeta\right)    \notag \\
    & \phantom{=}\times \chi_A^{\hat{\rho}^V}\left(\sqrt{(G-1)\tau(1-\eta)}\zeta\right) \notag \\
    & \phantom{=}\times\chi_A^{\hat{\rho}^U}\left(\sqrt{(1-\tau)(1-\eta)}\zeta\right)  \chi_A^{\hat{\rho}^E}\left(\sqrt{\eta}\zeta\right) \label{eq:ent-proof-noise-1} \\
    & = \chi_A^{\hat{\rho}^A}\left(\sqrt{1-\eta}\zeta\right)     \chi_A^{\hat{\rho}^V}\left(\sqrt{(1-\frac{1}{G})(1-\eta)}\zeta\right) \notag \\
    & \phantom{=}\times\chi_A^{\hat{\rho}^U}\left(\sqrt{(1-\frac{1}{G})(1-\eta)}\zeta\right)  \chi_A^{\hat{\rho}^E}\left(\sqrt{\eta}\zeta\right) \label{eq:ent-proof-noise-2}\\
    & = \chi_A^{\hat{\rho}^A}\left(\sqrt{1-\eta}\zeta\right)\notag \\
    & \phantom{=}\times e^{-(1-\frac{1}{G})(1-\eta)|\zeta|^2}e^{-(1-\frac{1}{G})(1-\eta)|\zeta|^2}e^{-(1+ \nb)\eta|\zeta|^2} \label{eq:ent-proof-noise-3}\\
    & = \chi_A^{\hat{\rho}^A}\left(\sqrt{1-\eta}\zeta\right) e^{-(1+ \eta(\nb + \frac{2}{\eta}(1-\eta)(1-\frac{1}{G}))|\zeta|^2}\label{eq:ent-proof-noise-4} \\
    & = \chi_A^{\hat{\rho}^A}\left(\sqrt{1-\eta}\zeta\right) e^{-(1+ (\nb + \nb^\prime))\eta)|\zeta|^2} \label{eq:ent-proof-noise-5}.
\end{align}
As the inputs to the system are not entangled, in \eqref{eq:ent-proof-noise-1} we split the characteristic function into input components described by \eqref{eq:ent-relationship}. In \eqref{eq:ent-proof-noise-2}, we set $\tau = 1/G$, and in \eqref{eq:ent-proof-noise-3} we evaluate $\chi_A^{\hat{\rho}^V}$, $\chi_A^{\hat{\rho}^U}$, and $\chi_A^{\hat{\rho}^E}$ for respective input states of $\hat{\rho}^V=\hat{\rho}^U=|0\rangle\langle0|$ and $\hat{\rho}^E=\hat{\rho}_{\nb}$ is a thermal state. In \eqref{eq:ent-proof-noise-4} and \eqref{eq:ent-proof-noise-5} we show that the characteristic function of Willie's state is equivalent to that of our original channel with additional thermal noise $\nb^\prime = \frac{2}{\eta}(1-\eta)(1-\frac{1}{G})$ that meets the entanglement breaking condition if $\eta(\bar{n}_{\textrm{B}}+\bar{n}_{\textrm{B}}^\prime) > 1-\eta$. Solving for $G$ yields $G=(1-\eta)/((1-\eta)-\eta \frac{\nb^\prime}{2})$ and completes the proof.


\section{Proof of Lemma 3}\label{sec:proof-of-lemma3}

The proof is structured as follows: we derive Willie's state when Alice transmits, and then compute the quantum $\chi^2$-divergence between it and Willie's state when Alice does not transmit.
We use Lemmas \ref{lemma:integral-theta} and \ref{lemma:integral-theta2} from \ref{app:usefullemmas}.
\subsection{State at Willie}
When Alice chooses to transmit, her input $\hat{\rho}^A$ is given by \eqref{eq:alice-arb-input}. This is mixed with the thermal state $\hat{\rho}^E=\hat{\rho}_{\nb}$ by the channel in Fig.~\ref{fig:system}.  
The input-output relationship between modal annihilation  operators  in Fig.~\ref{fig:system} yields 
$\hat{w}=\sqrt{1-\eta} \hat{a}-\sqrt{\eta} \hat{e} \label{eq:modal-relationship}$.

Let us now find the Fock representation of the output state $\hat{\rho}^W$ of Warden when Alice transmits a qubit. We utilize the anti-normally ordered characteristic function for $\hat{\rho}^W$:
\begin{align}
\chi_A^{\hat{\rho}^W}\left(\zeta_1,\zeta_2\right) &= \tr(\hat{\rho}^W e^{-\zeta_1^* \hat{w}_1} e^{\zeta_1 \hat{w}_1^{\dagger}} e^{-\zeta_2^* \hat{w}_2} e^{\zeta_2 \hat{w}_2^{\dagger}}).
\end{align}
Furthermore, let $\langle\cdot\rangle_x$ denote the expected value of an operator when applied to the state $\hat{\rho}^x$ for $x \in \{A,W,E\}$,  we have
\begin{align}
&\chi_A^{\hat{\rho}^W}\left(\zeta_1,\zeta_2\right) \notag\\ 
&= \langle e^{-\zeta_1^* \hat{w}_1} e^{\zeta_1 \hat{w}_1^{\dagger}} e^{-\zeta_2^* \hat{w}_2} e^{\zeta_2 \hat{w}_2^{\dagger}} \rangle_W \\
&=\langle e^{-\zeta_1^* (\sqrt{1-\eta} \hat{a}_1-\sqrt{\eta} \hat{e}_1)} e^{\zeta_1 (\sqrt{1-\eta} \hat{a}_1^\dagger-\sqrt{\eta} \hat{e}_1^\dagger)} \notag \\
&\phantom{=\langle} \times e^{-\zeta_2^* (\sqrt{1-\eta} \hat{a}_2-\sqrt{\eta} \hat{e}_2)} e^{\zeta_2 (\sqrt{1-\eta} \hat{a}_2^\dagger-\sqrt{\eta} \hat{e}_2^\dagger)} \rangle_W \label{eq:subW_hat} \\
&=\langle e^{-\zeta_1^* \sqrt{1-\eta} \hat{a}_1} e^{\zeta_1 \sqrt{1-\eta} \hat{a}_1^\dagger}e^{-\zeta_1^* \sqrt{1-\eta} \hat{a}_2} \notag \\
&\phantom{=\langle} \times e^{\zeta_2 \sqrt{1-\eta} \hat{a}_2^\dagger}e^{\zeta_1^*\sqrt{\eta} \hat{e}_1}e^{-\zeta\sqrt{\eta} \hat{e}_1^\dagger}e^{\zeta_2^*\sqrt{\eta} \hat{e}_2}e^{-\zeta \sqrt{\eta} \hat{e}_2^\dagger}\rangle_w \label{eq:CommuteModes1&2} \\
&=\chi_A^{\hat{\rho}^A}\left(\sqrt{1-\eta}\zeta_1,\sqrt{1-\eta}\zeta_2\right) \chi_A^{\hat{\rho}^E}\left(-\sqrt{\eta}\zeta_1,-\sqrt{\eta}\zeta_2\right) \label{eq:seperateCharFunctions}
\end{align}
where \eqref{eq:subW_hat} uses $\hat{w}=\sqrt{1-\eta}\hat{a}-\sqrt{\eta}\hat{e}$, \eqref{eq:CommuteModes1&2} uses the fact that bosonic operators acting on different modes commute with each other, and \eqref{eq:seperateCharFunctions} separates the characteristic function into characteristic functions for each input mode of Alice $\chi_A^{\hat{\rho}^A}$, and the environment $\chi_A^{\hat{\rho}^E}$. 

To derive the characteristic function for mode $\hat{e}$, note that the multi-mode thermal state $\hat{\rho}^E$ is separable, and thus
$\chi_A^{\hat{\rho}^E}(-\sqrt{\eta}\zeta_1, -\sqrt{\eta}\zeta_2)= \chi_A^{\hat{\rho}^E}(-\sqrt{\eta}\zeta_1) \chi_A^{\hat{\rho}^E}(-\sqrt{\eta}\zeta_2)$.
The solution for the single mode $\chi_A^{\hat{\rho}_{e}}$ is well known \cite[Sec.~7.4.3.2]{orszag16quantumotpics}:
\begin{align}
    \chi_A^{\hat{\rho}^E}(-\sqrt{\eta}\zeta_1, -\sqrt{\eta}\zeta_2) &= e^{-(1+\nb)\eta(|\zeta_1|^2+|\zeta_2|^2)}. \label{eq:char-func-e-def}
\end{align}

Now, we calculate the characteristic function for $\hat{\rho}^A$:
\begin{align}
    &\chi_A^{\hat{\rho}^A}\left(\sqrt{1-\eta}\zeta_1,\sqrt{1-\eta}\zeta_2\right) \label{eq:char-func-a-def} \\
    &= e^{-(1-\eta)(|\zeta_1|^2+|\zeta_2|^2)} \notag \\ 
    &\phantom{=} \times\tr\left(\hat{\rho}^A e^{\zeta_1 \sqrt{1-\eta} \hat{a}_1^\dagger} e^{-\zeta_1^* \sqrt{1-\eta} \hat{a}_1} e^{\zeta_2 \sqrt{1-\eta} \hat{a}_2^\dagger} e^{-\zeta_2^* \sqrt{1-\eta} \hat{a}_2} \right) \label{eq:BCHtheorem}\\
    &=e^{-(1-\eta)(|\zeta_1|^2+|\zeta_2|^2)}\notag \\
    &\phantom{=} \times \sum_{n=0}^\infty \sum_{m=0}^\infty  \langle nm| \left( |\alpha|^2 |01\rangle\langle01|+\gamma |01\rangle\langle10| \right . \notag \\
    &\phantom{=\sum}\left.+\gamma^*|10\rangle\langle01|+|\beta|^2|10\rangle\langle10| \right) \notag \\
    & \phantom{=\sum}  \times e^{\zeta_1 \sqrt{1-\eta} \hat{a}_1^\dagger} e^{-\zeta_1^* \sqrt{1-\eta} \hat{a}_1} e^{\zeta_2 \sqrt{1-\eta} \hat{a}_2^\dagger} e^{-\zeta_2^* \sqrt{1-\eta} \hat{a}_2} |nm\rangle \notag
\end{align}
where \eqref{eq:BCHtheorem} is due to the Baker-Campbell-Hausdorff theorem, followed by expansion of the trace. Furthermore,
\begin{subequations}
    \begin{align}
    &\chi_A^{\hat{\rho}^A}\left(\sqrt{1-\eta}\zeta_1,\sqrt{1-\eta}\zeta_2\right) \notag \\
    &= e^{-(1-\eta)(|\zeta_1|^2+|\zeta_2|^2)} \notag \\
    &\phantom{=}\times\big( |\alpha|^2\langle 01| \hat{V}(\zeta_1,\zeta_2, \hat{a}_1,\hat{a}_2)|01\rangle \label{eq:alpha2_BraKet} \\
    &\phantom{=} + \gamma  \langle 10| \hat{V}(\zeta_1,\zeta_2, \hat{a}_1,\hat{a}_2) |01\rangle \label{eq:alpha_betastar_BraKet} \\
    &\phantom{=} + \gamma^* \langle 01| \hat{V}(\zeta_1,\zeta_2, \hat{a}_1,\hat{a}_2)|10\rangle \label{eq:alphastar_beta_BraKet} \\
    &\phantom{=} + |\beta|^2\langle 10| \hat{V}(\zeta_1,\zeta_2, \hat{a}_1,\hat{a}_2) |10\rangle \big) \label{eq:beta2_BraKet}
\end{align}
\end{subequations}
where, for compactness, we define $\hat{V}(\zeta_1,\zeta_2, \hat{a}_1,\hat{a}_2)  \equiv e^{\zeta_1 \sqrt{1-\eta} \hat{a}_1^\dagger}e^{-\zeta_1^* \sqrt{1-\eta} \hat{a}_1} e^{\zeta_2 \sqrt{1-\eta} \hat{a}_2^\dagger} e^{-\zeta_2^* \sqrt{1-\eta} \hat{a}_2}$.
Expanding the exponentials for mode 1 of \eqref{eq:alpha2_BraKet} yields:
\begin{align}
    &e^{-(1-\eta)(|\zeta_1|^2+|\zeta_2|^2)} \notag \\
    &\times|\alpha|^2\langle 01| e^{\zeta_1 \sqrt{1-\eta} \hat{a}_1^\dagger}e^{-\zeta_1^* \sqrt{1-\eta} \hat{a}_1} e^{\zeta_2 \sqrt{1-\eta} \hat{a}_2^\dagger} e^{-\zeta_2^* \sqrt{1-\eta} \hat{a}_2}  |01\rangle\notag \\
    &= e^{-(1-\eta)(|\zeta_1|^2+|\zeta_2|^2)} |\alpha|^2\langle 01|\left(\sum_{k^{\prime}=0}^n(\hat{a}_1^\dagger)^{k^\prime}\frac{(\sqrt{1-\eta} \zeta)^{k^{\prime}}}{k^\prime!} \right)\notag \\
    &\phantom{=} \times \left(\sum_{k=0}^n\frac{\left(-\sqrt{1-\eta} \zeta^*\right)^k}{k!} (\hat{a}_1)^k\right) e^{-\zeta_2^* \sqrt{1-\eta} \hat{a}_2} e^{\zeta_2 \sqrt{1-\eta} \hat{a}_2^\dagger} |01\rangle.\notag
\end{align}
As mode 1 is in the $0$ state, every term in the summations is zero unless $k^\prime$ and $k$ are equal to 0. Thus, \eqref{eq:alpha2_BraKet} reduces to: 
\begin{align}
    &e^{-(1-\eta)(|\zeta_1|^2+|\zeta_2|^2)} |\alpha|^2 \langle 01| e^{-\zeta_2^* \sqrt{1-\eta} \hat{a}_2} e^{\zeta_2 \sqrt{1-\eta} \hat{a}_2^\dagger} |01\rangle. \label{eq:expo-mode-2}
\end{align}
Expanding the exponentials for mode 2 in \eqref{eq:expo-mode-2} yields:
\begin{align}
    &e^{-(1-\eta)(|\zeta_1|^2+|\zeta_2|^2)} |\alpha|^2 \bra{01}\left(\sum_{k^{\prime}=0}^n(\hat{a}_2^\dagger)^{k^\prime}\frac{(\sqrt{1-\eta} \zeta)^{k^{\prime}}}{k^\prime!} \right)\notag \\
    &\times\left(\sum_{k=0}^n\frac{\left(-\sqrt{1-\eta} \zeta^*\right)^k}{k!} (\hat{a}_2)^k\right)  \ket{01} \\
    &= e^{-(1-\eta)(|\zeta_1|^2+|\zeta_2|^2)} |\alpha|^2 \langle 0| \otimes (\langle1|+(\sqrt{1-\eta}\zeta_2)\langle0|)|0\rangle \notag \\
    &\phantom{=}\otimes (|1\rangle-\zeta_2^*\sqrt{1-\eta}|0\rangle) \\
    &= e^{-(1-\eta)(|\zeta_1|^2+|\zeta_2|^2)} |\alpha|^2(1-(1-\eta)|\zeta_2|^2).
\end{align}
\crefrange{eq:alpha_betastar_BraKet}{eq:beta2_BraKet} follow similarly to yield:
\begin{subequations}
\begin{align}
    &\chi_A^{\hat{\rho}^W}(\zeta_1,\zeta_2) \notag \\
    &= e^{-(1+\eta \nb)(|\zeta_1|^2+|\zeta_2|^2)}|\alpha|^2(1-(1-\eta)|\zeta_2|^2) \\
    &\phantom{=} -e^{-(1+\eta \nb)(|\zeta_1|^2+|\zeta_2|^2)}\gamma  (1-\eta) \zeta_1\zeta_2^* \\
    &\phantom{=} - e^{-(1+\eta \nb)(|\zeta_1|^2+|\zeta_2|^2)}\gamma^* (1-\eta)\zeta_1^*\zeta_2 \\
    &\phantom{=} + e^{-(1+\eta \nb)(|\zeta_1|^2+|\zeta_2|^2)}|\beta|^2 (1-(1-\eta)|\zeta_1|^2),
\end{align}
\end{subequations}
which simplifies to \eqref{eq:charfuncwillie-simplified}. A generalized two-mode state is:
\begin{align}
    \hat{\rho}^W = \sum_{m=0}^\infty\sum_{n=0}^\infty\sum_{m^\prime=0}^\infty\sum_{n^\prime=0}^\infty \langle nm|\hat{\rho}^W|n^\prime m^\prime\rangle |nm\rangle\langle n^\prime m^\prime|. \label{eq:gen-state}
\end{align}
Applying the operator Fourier transform in \eqref{eq:fourier-transform} on the characteristic function $\chi_A^{\hat{\rho}^W}(\zeta_1,\zeta_2)$, we obtain: 
\begin{subequations}
\begin{align}
    \hat{\rho}^W &= \sum_{m=0}^\infty\sum_{n=0}^\infty\sum_{m^\prime=0}^\infty\sum_{n^\prime=0}^\infty \langle nm| \notag\\
    &\phantom{=}\left ( |\alpha|^2 \int \frac{d^2\zeta_2}{\pi} \int \frac{d^2\zeta_1}{\pi} \kappa_1\kappa_2(1-(1-\eta)|\zeta_2|^2)   \right. \\
    &\phantom{=} + \gamma  \int \frac{d^2\zeta_2}{\pi} \int \frac{d^2\zeta_1}{\pi} \kappa_1\kappa_2(-(1-\eta)\zeta_1\zeta_2^*)   \\
    &\phantom{=} + \gamma^*  \int \frac{d^2\zeta_2}{\pi} \int \frac{d^2\zeta_1}{\pi} \kappa_1\kappa_2 (-(1-\eta)\zeta_1^*\zeta_2)   \\
    &\phantom{=} \left . + |\beta|^2 \int \frac{d^2\zeta_2}{\pi} \int \frac{d^2\zeta_1}{\pi} \kappa_1\kappa_2 (1-(1-\eta)|\zeta_1|^2)    \right ) \\ 
    &\phantom{=} |n^\prime m^\prime\rangle |nm\rangle\langle n^\prime m^\prime|,
\end{align}  
\end{subequations}
where $\kappa_i \equiv e^{-(1+\eta \nb)|\zeta_i|^2} e^{\zeta_i\hat{c}_i^\dagger} e^{-\zeta_i^*\hat{c}_i}$ and $i=1,2$.
The integrals are separable with respect to each mode, yielding: 
\begin{subequations}
\begin{align}
    \hat{\rho}^W &= \sum_{m=0}^\infty\sum_{n=0}^\infty\sum_{m^\prime=0}^\infty\sum_{n^\prime=0}^\infty \langle nm|  \notag\\
    &\phantom{=} \left ( |\alpha|^2  \int\frac{d^2\zeta_2}{\pi} \kappa_2 (1-(1-\eta)|\zeta_2|^2) 
    \int  \frac{d^2\zeta_1}{\pi} \kappa_1 \label{eq:sep_alpha2} \right .\\
    &\phantom{=} - \gamma (1-\eta)  \int \frac{d^2\zeta_2}{\pi} \kappa_2 \zeta_2^*  \int  \frac{d^2\zeta_1}{\pi} \kappa_1 \zeta_1   \label{eq:sep_alpha_betastar} \\
    &\phantom{=} - \gamma^*  (1-\eta) \int \frac{d^2\zeta_2}{\pi} \kappa_2 \zeta_2  \int  \frac{d^2\zeta_1}{\pi} \kappa_1 \zeta_1^*  \label{eq:sep_alphastar_beta} \\
    &\phantom{=} \left . + |\beta|^2  \int  \frac{d^2\zeta_2}{\pi} \kappa_2 \int    \frac{d^2\zeta_1}{\pi} \kappa_1 (1-(1-\eta)|\zeta_1|^2) \label{eq:sep_beta2} \right ) \\ 
    &\phantom{=} |n^\prime m^\prime\rangle |nm\rangle\langle n^\prime m^\prime|
\end{align}
\end{subequations}

Next, we calculate $\langle nm|\hat{\rho}^W|n^\prime m^\prime\rangle$ for each term in \crefrange{eq:sep_alpha2}{eq:sep_beta2}. We first convert to polar coordinates and use Lemma \ref{lemma:integral-theta} to integrate over $\theta$. Equation \eqref{eq:sep_alpha2} simplifies to:
\begin{align}
    &4|\alpha|^2\int_{r_2=0}^\infty dr_2 r_2 (1-(1-\eta) r_2^2) e^{-(1+\eta \nb)r_2^2} \mathcal{L}_m(r_2^2)  \notag \\
    &\times\int_{r_1=0}^\infty dr_1 r_1 e^{-(1+\eta \nb)r_1^2} \mathcal{L}_n(r_1^2).  \label{eq:tequation1}
\end{align}
These integrals are further evaluated to the following:
\begin{align}
    &|\alpha|^2 \left(\frac{(\eta \nb)^m}{(1+\eta \nb)^{m+1}}- \frac{(1-\eta)(\eta \nb-m)(\eta \nb)^{m-1}}{(1+\eta \nb)^{m+2}}  \right) \notag \\
    &\times \frac{(\eta \nb)^n}{(1+\eta \nb)^{n+1}}. \label{eq:aa-sol}
\end{align}
The similar integrals in \eqref{eq:sep_beta2} follow with $m$ and $n$ swapped:
\begin{align}
    &|\beta|^2 \left(\frac{(\eta \nb)^n}{(1+\eta \nb)^{n+1}}- \frac{(1-\eta)(\eta \nb-n)(\eta \nb)^{n-1}}{(1+\eta \nb)^{n+2}}  \right)\notag \\
    &\times \frac{(\eta \nb)^m}{(1+\eta \nb)^{m+1}}. \label{eq:bb-sol}
\end{align}

For \eqref{eq:sep_alpha_betastar}, we use polar coordinates and Lemma~\ref{lemma:integral-theta2}:
\begin{align}
&- \gamma (1-\eta)  \int \frac{d^2\zeta_2}{\pi} \kappa_2 \zeta_2^*  \int  \frac{d^2\zeta_1}{\pi} \kappa_1 \zeta_1\notag \\
&= - 4\gamma (1-\eta)\sqrt{m+1}\sqrt{n^\prime+1} \int_{r_2=0}^\infty dr_2 r_2^3 e^{-(1+\eta \nb)r_2^2} \notag \\
&\phantom{=}\times\sum_{k=0}^{m} (r_2^2)^{k}\frac{(-1)^{k}}{k!}\binom{m}{k}\frac{1}{k+1}  \notag \\
&\phantom{=} \times \int_{r_1=0}^\infty dr_1 r_1^3 e^{-(1+\eta \nb)r_1^2}  \sum_{l=0}^{n^\prime} (r_1^2)^{l}\frac{(-1)^{l}}{l!}\binom{n^\prime}{l}\frac{1}{l+1}\notag \\
&= - \gamma (1-\eta)\sqrt{m+1}\sqrt{n^\prime+1}   \sum_{k=0}^{m} \frac{(k+1)!}{(1+\eta \nb)^{k+2}} \frac{(-1)^{k}}{k!}  \notag \\
&\phantom{=} \times  \binom{m}{k}\frac{1}{k+1}\sum_{l=0}^{n^\prime} \frac{(l+1)!}{(1+\eta \nb)^{l+2}}\frac{(-1)^{l}}{l!}\binom{n^\prime}{l}\frac{1}{l+1} \notag \\
&= - \gamma (1-\eta)\sqrt{m+1}\sqrt{n^\prime+1} \notag \\
&\phantom{=}\times  \sum_{k=0}^{m} \frac{(-1)^k}{(1+\eta \nb)^{k+2}} \binom{m}{k} \sum_{l=0}^{n^\prime} \frac{(-1)^{l}}{(1+\eta \nb)^{l+2}}\binom{n^\prime}{l} \notag\\
&= - \gamma (1-\eta)\sqrt{m+1}\sqrt{n^\prime+1} \notag \\
&\phantom{=}\times\frac{(\eta \nb)^m}{(1+\eta \nb)^{m+2}} \frac{(\eta \nb)^{n^\prime}}{(1+\eta \nb)^{n^\prime+2}} \notag\\
&= - \gamma (1-\eta)\sqrt{m+1}\sqrt{n} \frac{(\eta \nb)^{m+n-1}}{(1+\eta \nb)^{m+n+3}},  \label{eq:ab-int-polar4}
\end{align}
where we use $n^\prime=n-1$ for compactness.

The integrals in \eqref{eq:sep_alphastar_beta} follow the same steps to yield:
\begin{align}
        &\phantom{=} - \gamma^*  (1-\eta) \int \frac{d^2\zeta_2}{\pi} \kappa_2 \zeta_2  \int  \frac{d^2\zeta_1}{\pi} \kappa_1 \zeta_1^* \notag \\
        &=-\gamma^*  (1-\eta) \sqrt{m}\sqrt{n+1}\frac{(\eta \nb)^{m+n-1}}{(1+\eta \nb)^{m+n+3}} \label{eq:astarbeta-sol}.
\end{align}

Replacing the $\langle nm|\hat{\rho}^W|n^\prime m^\prime\rangle$ in \eqref{eq:gen-state} with \eqref{eq:aa-sol}-\eqref{eq:astarbeta-sol} yields \eqref{eq:SumTrace}.
The $\langle n+1,m-1|$ and $\langle n-1,m+1|$ terms in \eqref{eq:SumTrace} are due to the off-by-one integrals over $\theta$ as per Lemma~\ref{lemma:integral-theta2}.

In the rounds when Alice does not transmit, Willie observes a dual-mode attenuated thermal state $\hat{\rho}_0\equiv\left(\hat{\rho}_{\eta\nb}\right)^{\otimes 2}$. 

\subsection{Calculating the \texorpdfstring{$\chi^2$}{chi-squared}-divergence}
We seek $D_{\chi^2}(\hat{\rho}^W\|\hat{\rho}_0)$ defined in \eqref{eq:chisquare}, where $\hat{\rho}^W$ is in \eqref{eq:SumTrace}. 
Since $\hat{\rho}_0$ is diagonal, $\hat{\rho}_0^{-1}$ is also diagonal. Consequently, only the diagonal elements of $\left(\hat{\rho}^W\right)^2$ are required to evaluate $\tr\left[\left(\hat{\rho}^W\right)^2\hat{\rho}_{0}^{-1}\right]$. As $\hat{\rho}^W$ is a tri-diagonal state, is decomposes into operators $\hat{A}$, $\hat{B}$, and $\hat{D}$ corresponding to each of the diagonals: $\hat{\rho}^W = \hat{A}+\hat{B}+\hat{D}$, where 
\begin{align}
    \hat{A} &= \sum_{n=0}^\infty \sum_{m=1}^\infty a_{nm} |nm\rangle\langle n+1,m-1| \label{eq:A} \\
    \hat{B} &= \sum_{n=1}^\infty \sum_{m=0}^\infty b_{nm} |nm\rangle\langle n-1,m+1|, \label{eq:B}\\
   \hat{D} &= \sum_{n=0}^\infty \sum_{m=0}^\infty d_{nm} |nm\rangle\langle nm|, \label{eq:D}
\end{align}
and the coefficients are $a_{nm} = -\gamma^*  W_2(n,m)$, $b_{nm} = a_{nm}^*$, and $d_{nm} = |\alpha|^2 W_1(n, m) + |\beta|^2 W_1(m, n)$.

We then expand $\left(\hat{\rho}^W\right)^2$ as
\begin{align}
    \left(\hat{\rho}^W\right)^2 &= (\hat{A}+\hat{B}+\hat{D})^2 \\ 
    &= \hat{A}^2+\hat{A}\hat{B}+\hat{A}\hat{D}+\hat{B}\hat{A}\notag\\
    &\phantom{=}+\hat{B}^2+\hat{B}\hat{D}+\hat{D}\hat{A}+\hat{D}\hat{B}+\hat{D}^2
\end{align}
By substitution from \crefrange{eq:A}{eq:D} and orthonormality of the Fock states, we obtain:
\begin{align}
    \hat{A}^2 &= \sum_{n=0}^\infty \sum_{m=2}^\infty a_{nm} a_{n+1,m-1} |nm\rangle\langle n+2,m-2| \\
    \hat{B}^2 &= \sum_{n=2}^\infty \sum_{m=0}^\infty b_{nm} b_{n-1,m+1} |nm \rangle \langle n-2,m+2| \\
    \hat{D}^2 &= \sum_{n=0}^\infty \sum_{m=0}^\infty d_{nm}^2 |nm \rangle \langle nm| \\
    \hat{A}\hat{B} &= \sum_{n=0}^\infty \sum_{m=1}^\infty a_{nm}b_{n+1,m-1} |nm \rangle \langle nm| \\
    \hat{B}\hat{A} &= \sum_{n=1}^\infty \sum_{m=0}^\infty a_{n-1,m+1}b_{nm} |nm \rangle \langle nm| \\
    \hat{A}\hat{D} = \hat{D}\hat{A} &= \sum_{n=0}^\infty \sum_{m=1}^\infty a_{nm}d_{n+1,m-1} |nm\rangle\langle n+1,m-1| \\
    \hat{B}\hat{D} = \hat{D}\hat{B} &= \sum_{n=1}^\infty \sum_{m=0}^\infty b_{nm} d_{n-1,m+1} |nm\rangle\langle n-1,m+1|.
\end{align}

Hence, the only terms that contribute to the main diagonal of $\left(\hat{\rho}^W\right)^2$ are $\hat{D}^2$, $\hat{A}\hat{B}$, and $\hat{B}\hat{A}$. Therefore,
\begin{align}
    &D_{\chi^2}(\hat{\rho}^W\|\hat{\rho}_0) = \tr[\left(\hat{\rho}^W\right)^{2}\hat{\rho}_{0}^{-1}] - 1 \\
    &=  \sum_{n=0,m=0}^\infty (d_{nm}^2 + a_{nm}b_{n+1,m-1} + a_{n-1,m+1}b_{nm}) t_{nm}^{-1} -1 \label{eq:chi-square-f2} \\
&= -1 + \left[\left((1-\eta)^2+\eta\nb(1+\eta\nb)\right)(|\alpha|^4+|\beta|^4) \right. \notag \\
&\phantom{=} \left. + 2\left(|\alpha|^2|\beta|^2\eta\nb(1+\eta\nb)+ (1-\eta)^2|\gamma|^4\right)\right] \notag \\
&\phantom{=}\times\frac{1}{\eta\nb(1+\eta\nb)} \label{eq:chi-square-f3}.
\end{align}
where $t_{nm}^{-1}\equiv \left(t_n(\eta\nb)t_m(\eta\nb)\right)^{-1}$ , with $t_k(\cdot)$ defined in \eqref{eq:thermCoefficient}. 
We obtain \eqref{eq:chi-square-f3} using standard algebra and known closed forms of the summations. Using the constraint $|\alpha|^2 + |\beta|^2 = (|\alpha|^2 + |\beta|^2)^2 = 1$ and $|\gamma| < |\alpha\beta|$ shows that qubit input, where $\gamma=\alpha\beta^*$, maximizes the $\chi^2$-divergence and yields \eqref{eq:chisquared-qubit} with the lemma.
Furthermore, the $\chi^2$-divergence is minimized by a balanced classical mixture with $\gamma = 0$, $|\alpha|^2 = |\beta|^2=1/2$, and is equal to $1/2$ of the expression in \eqref{eq:chisquared-qubit}.

\renewcommand{\thesection}{Appendix \Roman{section}}
\section{Useful Lemmas}
\label{app:usefullemmas}
\renewcommand{\thesection}{\Roman{section}}
\begin{lemma} \label{lemma:pure-loss}
    The output of complementary channel for a pure-loss bosonic channel with transmittance $\tau$ followed by one with transmittance $\eta$ is equivalent to the output of a single complementary pure-loss bosonic channel with reflectance $\tau(1-\eta)$.
\end{lemma}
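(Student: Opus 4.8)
The plan is to track the annihilation operator of the complementary (environment) output through the two cascaded beamsplitters and show that, because both ancillary ports are in vacuum, the two loss stages collapse into a single one. First I would write the modal relations for the cascade: the first beamsplitter with transmittance $\tau$ mixes the signal mode $\hat{a}$ with a vacuum port $\hat{u}_1$, producing the intermediate mode $\hat{a}^\prime=\sqrt{\tau}\hat{a}+\sqrt{1-\tau}\hat{u}_1$; the second beamsplitter with transmittance $\eta$ then mixes $\hat{a}^\prime$ with a second vacuum port $\hat{u}_2$, and its complementary output, following the convention in \eqref{eq:bsmodal-willie}, is $\hat{w}=\sqrt{1-\eta}\hat{a}^\prime-\sqrt{\eta}\hat{u}_2$. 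Substituting gives
\begin{align}
\hat{w}=\sqrt{\tau(1-\eta)}\,\hat{a}+\sqrt{(1-\tau)(1-\eta)}\,\hat{u}_1-\sqrt{\eta}\,\hat{u}_2,\notag
\end{align}
so the fraction of the input amplitude reaching the environment is $\sqrt{\tau(1-\eta)}$, i.e.\ reflectance $\tau(1-\eta)$, which already identifies the claimed parameter.

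The cleanest way to finish, matching the characteristic-function machinery used elsewhere in this appendix, is to compute the anti-normally ordered characteristic function of $\hat{w}$. Since $\hat{a},\hat{u}_1,\hat{u}_2$ are independent and $\hat{u}_1,\hat{u}_2$ are in vacuum (with vacuum characteristic function $e^{-|\zeta|^2}$), the output characteristic function factorizes exactly as in \eqref{eq:ent-proof-noise-1}:
\begin{align}
\chi_A^{\hat{\rho}^W}(\zeta)&=\chi_A^{\hat{\rho}^A}\!\left(\sqrt{\tau(1-\eta)}\,\zeta\right)e^{-(1-\tau)(1-\eta)|\zeta|^2}e^{-\eta|\zeta|^2}\notag\\
&=\chi_A^{\hat{\rho}^A}\!\left(\sqrt{\tau(1-\eta)}\,\zeta\right)e^{-\left(1-\tau(1-\eta)\right)|\zeta|^2},\notag
\end{align}
where the lone algebraic identity required is $(1-\tau)(1-\eta)+\eta=1-\tau(1-\eta)$. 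The resulting expression is precisely the characteristic function of the complementary output of a single pure-loss channel with reflectance $\tau(1-\eta)$, whose single vacuum port contributes exactly the factor $e^{-(1-\tau(1-\eta))|\zeta|^2}$; since the characteristic function determines the state, this proves the equivalence.

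Equivalently, and more transparently at the operator level, I would observe that the two ancillary terms $\sqrt{(1-\tau)(1-\eta)}\,\hat{u}_1-\sqrt{\eta}\,\hat{u}_2$ have squared coefficients summing to $1-\tau(1-\eta)$, so they can be rewritten as $\sqrt{1-\tau(1-\eta)}\,\hat{u}$ for a single normalized mode $\hat{u}$. I do not expect a genuine obstacle here; the content is essentially bookkeeping of beamsplitter coefficients. The one point that warrants care is justifying this vacuum-mode recombination rigorously, namely that $\hat{u}$ is a legitimate bosonic mode satisfying the canonical commutation relations \emph{and} in the vacuum state, rather than a mere formal symbol. This follows because any passive linear combination of independent vacuum modes is again a vacuum mode, but the characteristic-function route above sidesteps the issue entirely by exhibiting directly that the combined environment contributes precisely the vacuum factor of a single loss channel.
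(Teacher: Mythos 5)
Your proof is correct and follows essentially the same route as the paper's: both compute the anti-normally ordered characteristic function of the cascade's complementary output, factor it into the signal contribution at argument $\sqrt{\tau(1-\eta)}\,\zeta$ times the vacuum-port Gaussians, and use $(1-\tau)(1-\eta)+\eta=1-\tau(1-\eta)$ to match it to the single-beamsplitter complementary channel. The additional operator-level remark about recombining the two vacuum ports into one is a nice sanity check but not needed, as you note.
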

\begin{proof}
\begin{figure}[htb]
\centering
\includegraphics[width=7.5cm]{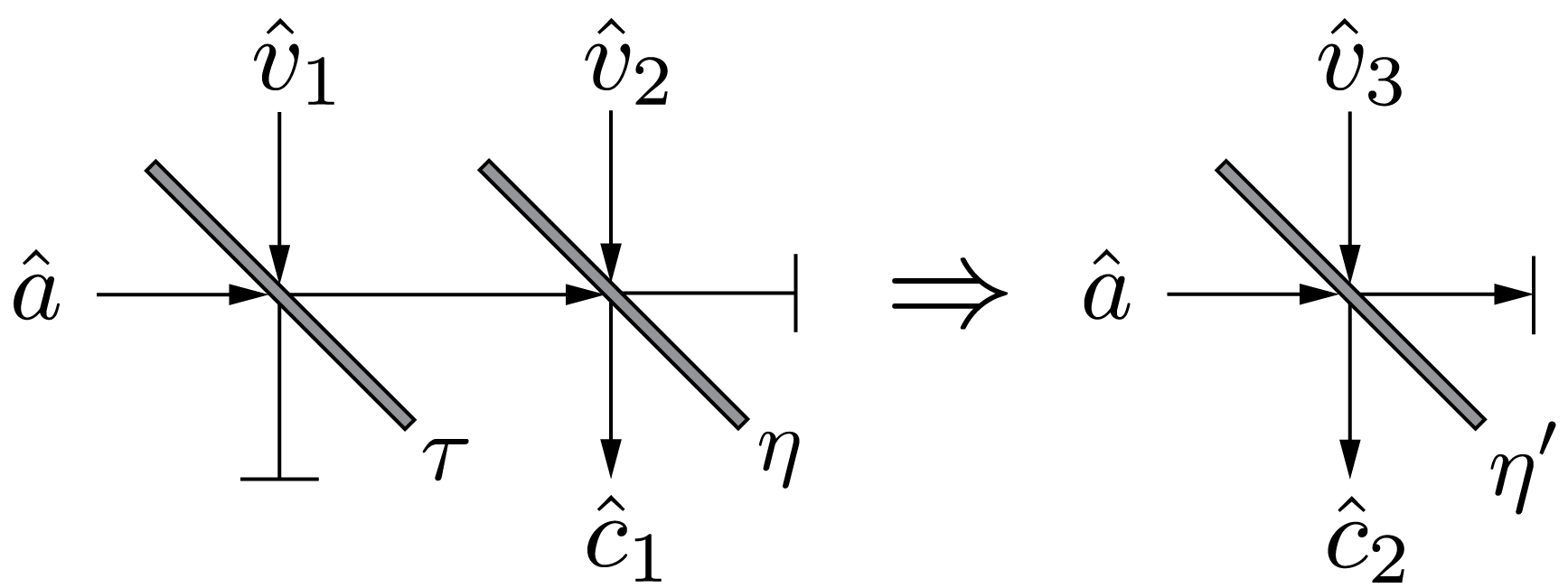}
\caption{The complementary channel formed by two pure-loss bosonic channels with transmittances $\eta$ and $\tau$ is equivalent to the complementary pure-loss bosonic channel with transmittance $\eta^\prime =1- \tau(1-\eta)$. Modes $\hat{v}_1,\hat{v}_2,\hat{v}_3$ are in vacuum state. 
}
\label{fig:decomposition}
\end{figure}
Consider the complementary channel to mode $\hat{c}_2$ on the right-hand side of Fig.~\ref{fig:decomposition} where mode $\hat{v}_3$ is in a vacuum state. The characteristic function for the state of mode $\hat{c}_2$ is:
\begin{align}
    \chi_A^{\hat{\rho}_{c_2}}(\zeta) &= \chi_A^{\hat{\rho}^A}(\sqrt{1-\eta^\prime}\zeta) \chi_A^{\hat{\rho}^{V_2}} (\sqrt{\eta^\prime}\zeta) \\
    &= \chi_A^{\hat{\rho}^A}(\sqrt{1-\eta^\prime)}\zeta) e^{-\eta^\prime|\zeta|^2} \label{eq:c2-state}
\end{align}
The characteristic function for the state of mode $\hat{c}_2$ is:
\begin{align}
 \chi_A^{\hat{\rho}_{c_1}}(\zeta) &= \chi_A^{\hat{\rho}^A}(\sqrt{\tau(1-\eta)}\zeta) \chi_A^{\hat{\rho}^{V_1}} (\sqrt{(1-\tau)(1-\eta)}\zeta) \notag \\
    & \phantom{=}\times\chi_A^{\hat{\rho}^{V_2}}(\sqrt{\eta}\zeta) \\
    &= \chi_A^{\hat{\rho}^A}(\sqrt{\tau(1-\eta)}\zeta) e^{-(1-\tau)(1-\eta)|\zeta|^2} e^{-\eta|\zeta|^2} \\
    &= \chi_A^{\hat{\rho}^A}(\sqrt{\tau(1-\eta)}\zeta) e^{-(1-\tau(1-\eta))|\zeta|^2} \\
    &= \chi_A^{\hat{\rho}^A}(\sqrt{1-\eta^\prime)}\zeta) e^{-\eta^\prime|\zeta|^2} \label{eq:c1-state}.
\end{align}
Therefore, the states in modes $\hat{c}_1$ and $\hat{c}_2$ are equivalent as \eqref{eq:c2-state} is equal to that of \eqref{eq:c1-state} completing the proof.
\end{proof}
\begin{lemma} \label{lemma:integral-theta}
\begin{align}
&\int_{\theta=0}^{2\pi}d\theta \langle m| e^{r(\cos(\theta) + \im\sin(\theta)) \hat{c}^{\dagger}} e^{-r(\cos(\theta) - \im\sin(\theta)) \hat{c}}  | m^\prime\rangle \notag \\ 
&=2\pi\mathcal{L}_m(r^2).
\end{align}
\end{lemma}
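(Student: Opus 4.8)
The plan is to collapse the trigonometric exponents using Euler's formula: since $\cos\theta + \im\sin\theta = e^{\im\theta}$ and $\cos\theta - \im\sin\theta = e^{-\im\theta}$, setting $\zeta = r e^{\im\theta}$ (with $r\ge 0$) turns the integrand into the matrix element $\langle m| e^{\zeta\hat{c}^\dagger}e^{-\zeta^*\hat{c}}|m'\rangle$ of exactly the kernel $e^{\zeta\hat{c}^\dagger}e^{-\zeta^*\hat{c}}$ that appears in the operator Fourier transform \eqref{eq:fourier-transform}. The computation then reduces to expanding the two exponentials as power series and letting the ladder action of $\hat{c}$ and $\hat{c}^\dagger$ on Fock states do the work, after which the angular integral enforces the selection rule that produces the Laguerre polynomial.

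Concretely, I would write $e^{\zeta\hat{c}^\dagger} = \sum_{k\ge 0}\frac{\zeta^k}{k!}(\hat{c}^\dagger)^k$ and $e^{-\zeta^*\hat{c}} = \sum_{l\ge 0}\frac{(-\zeta^*)^l}{l!}\hat{c}^l$, then apply $\hat{c}^l|m'\rangle = \sqrt{m'!/(m'-l)!}\,|m'-l\rangle$ (vanishing for $l>m'$) together with the adjoint relation $\langle m|(\hat{c}^\dagger)^k = \sqrt{m!/(m-k)!}\,\langle m-k|$ (vanishing for $k>m$). Orthonormality of the Fock basis, $\langle m-k|m'-l\rangle = \delta_{m-k,\,m'-l}$, collapses the double sum, and using $\zeta^k(-\zeta^*)^l = (-1)^l r^{k+l} e^{\im(k-l)\theta}$ the matrix element becomes $\sum_{k,l}\frac{(-1)^l r^{k+l}e^{\im(k-l)\theta}}{k!\,l!}\sqrt{\frac{m!}{(m-k)!}}\sqrt{\frac{m'!}{(m'-l)!}}\,\delta_{m-k,\,m'-l}$.

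The final step is the angular integral $\int_0^{2\pi}e^{\im(k-l)\theta}\,d\theta = 2\pi\,\delta_{k,l}$, which forces $k=l$; combined with the earlier constraint $m-k=m'-l$ this pins $m=m'$, so the whole expression vanishes unless $m=m'$ (the stated right-hand side being understood to carry this implicit $\delta_{m,m'}$). Setting $k=l$ and $m=m'$ leaves the finite sum $2\pi\sum_{k=0}^{m}\frac{(-1)^k r^{2k}}{(k!)^2}\frac{m!}{(m-k)!}$, and recognizing $\frac{m!}{k!\,(m-k)!}=\binom{m}{k}$ rewrites it as $2\pi\sum_{k=0}^m \frac{(-1)^k}{k!}\binom{m}{k}r^{2k} = 2\pi\,\mathcal{L}_m(r^2)$. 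The only genuinely delicate point is bookkeeping the two selection rules simultaneously — the angular integral enforcing $k=l$ and Fock orthogonality enforcing $m-k=m'-l$ — and being explicit that together they yield the implicit $\delta_{m,m'}$; once that is tracked, the identification of the surviving series as the Laguerre polynomial is immediate and everything else is routine series manipulation.
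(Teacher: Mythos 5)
Your proposal is correct and follows essentially the same route as the paper's proof: expand both exponentials as power series, use the angular integral to enforce $k=l$ via $2\pi\delta_{k,l}$, apply the ladder operators and Fock-state orthogonality, and recognize the surviving finite sum $2\pi\sum_{k=0}^m\frac{(-1)^k}{k!}\binom{m}{k}r^{2k}$ as $2\pi\mathcal{L}_m(r^2)$. Your explicit bookkeeping of the two selection rules and the resulting implicit $\delta_{m,m'}$ is slightly more careful than the paper's write-up, which silently sets $m'=m$ mid-derivation, but the argument is the same.
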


\begin{proof}
\begin{align}
&\int_{\theta=0}^{2\pi}d\theta \langle m| e^{r(\cos(\theta) + \im\sin(\theta)) \hat{c}^{\dagger}} e^{-r(\cos(\theta) - \im\sin(\theta)) \hat{c}}  | m^\prime\rangle \label{eq:integral_theta1} \\
&= \int_{\theta=0}^{2\pi} d\theta \sum_{k^{\prime}=0}^{\infty}\langle m|\frac{r^{k^\prime}e^{j\theta k^{\prime}}}{k^\prime!}(\hat{c}^\dagger)^{k^\prime}\sum_{k=0}^{\infty} \frac{(-r)^ke^{-j\theta k}}{k!}\hat{c}^{k}|m\rangle \label{eq:integral_theta2} \\
&= 2 \pi \sum_{k=0}^{m} \frac{(-r^2)^k}{k!}\sqrt{\frac{1}{k!^2} \frac{m!}{(m-k)!} \frac{m^\prime!}{(m^\prime-k)!} }  \langle m-k|m^\prime-k\rangle \label{eq:integral_theta4}\\
&=2\pi\sum_{k=0}^m \frac{(-r^2)^k}{k !}\binom{m}{k},\label{eq:integral_theta5} 
\end{align}
where, in \eqref{eq:integral_theta2} we expand the exponentials, and in \eqref{eq:integral_theta4} the orthogonality of complex exponentials yields $2 \pi \delta_{{k,k^\prime}}$ and allows us to eliminate summation over $k^\prime$. We also apply the $(\hat{c}^\dagger)^k$ and $\hat{c}^k$ operators to $\bra{m}$ and $\ket{m^\prime}$, respectively. Orthogonality of the Fock states yields  \eqref{eq:integral_theta5}, which defines the Laguerre polynomial and yields the proof.
\end{proof}

\begin{lemma} \label{lemma:integral-theta2}
\begin{align}
&\int_{\theta=0}^{2\pi}d\theta (\cos(\theta) - \im\sin(\theta)) \notag \\
& \phantom{=}\times \langle m| e^{r(\cos(\theta) + \im\sin(\theta)) \hat{c}^{\dagger}} e^{-r(\cos(\theta) - \im\sin(\theta)) \hat{c}}  | m^\prime \rangle \notag \\ 
&= -2\pi r \sqrt{m}\sum_{k^\prime=0}^{m+1} (r^2)^{k^\prime}\frac{(-1)^{k^\prime}}{k^\prime!}\binom{m}{k^\prime}\frac{1}{k^\prime+1}.
\end{align}
\end{lemma}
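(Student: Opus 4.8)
The plan is to prove this identity by the same mechanism used for Lemma~\ref{lemma:integral-theta}, the only new ingredient being the extra angular prefactor $\cos\theta-\im\sin\theta=e^{-\im\theta}$, which produces the ``off-by-one'' shift that the main text alludes to. First I would rewrite both the prefactor and the two operator exponentials in terms of $e^{\pm\im\theta}$, so that the integrand becomes $e^{-\im\theta}\,\langle m|e^{r e^{\im\theta}\hat c^\dagger}e^{-r e^{-\im\theta}\hat c}|m'\rangle$.

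Next I would Taylor-expand the two exponentials as $\sum_{p}\frac{(re^{\im\theta})^p}{p!}(\hat c^\dagger)^p$ and $\sum_q\frac{(-re^{-\im\theta})^q}{q!}\hat c^q$, interchange summation and integration, and collect the $\theta$-dependence into a single factor $e^{\im\theta(p-q-1)}$. Orthogonality of the complex exponentials, $\int_0^{2\pi}e^{\im\theta k}\,d\theta=2\pi\delta_{k,0}$, then annihilates every term except those with $p=q+1$; this is precisely the unit shift relative to Lemma~\ref{lemma:integral-theta}, where the surviving selection was $p=q$. What remains is the single sum $2\pi\sum_q \frac{r^{q+1}(-r)^q}{(q+1)!\,q!}\langle m|(\hat c^\dagger)^{q+1}\hat c^q|m'\rangle$.

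The third step is to evaluate the surviving matrix element using $\hat c^q|m'\rangle=\sqrt{m'!/(m'-q)!}\,|m'-q\rangle$ and $\langle m|(\hat c^\dagger)^{q+1}=\sqrt{m!/(m-q-1)!}\,\langle m-q-1|$, whose overlap forces the constraint $m'=m-1$ and contributes the factor $\sqrt m$ together with the factorial ratios. Finally I would repackage the resulting finite sum, using $\frac{(m-1)!}{q!\,(m-q-1)!}=\binom{m-1}{q}$ and $(q+1)!=(q+1)q!$, to reach a closed form of exactly the advertised $\binom{\cdot}{k}$, $\tfrac{1}{k+1}$ structure.

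I expect the main obstacle to be purely in the bookkeeping of this last repackaging: keeping the square-root normalizations, the alternating signs carried by $(-r)^q$, and the shifted summation limit consistent so that the answer lands precisely on the stated binomial form. In particular, tracking whether the free Fock index that survives is the bra label or the ket label (they differ by one here) is what determines whether the prefactor reads $\sqrt m$ versus $\sqrt{m+1}$ and fixes the overall sign, so this is the step I would check most carefully against its usage in the $\gamma$-contribution integrals.
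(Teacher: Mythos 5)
Your proposal is correct and takes essentially the same route as the paper, whose entire proof is the one-line remark that the argument of Lemma~\ref{lemma:integral-theta} goes through with the orthogonality of complex exponentials now selecting $k^\prime=k+1$ --- exactly the $e^{-\im\theta}$-induced shift you describe. The repackaging step you flag is indeed the delicate one: carrying it out gives $2\pi r\sqrt{m}\sum_{k}(r^2)^{k}\frac{(-1)^{k}}{k!}\binom{m-1}{k}\frac{1}{k+1}$ with the constraint $m^\prime=m-1$ enforced, which agrees with the stated right-hand side only up to an overall sign and the replacement $\binom{m}{k^\prime}\to\binom{m-1}{k^\prime}$ (and the summation limit), so the residual discrepancy sits in the lemma's statement rather than in your approach.
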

\begin{proof}

The proof follows that of Lemma~\ref{lemma:integral-theta}, except that the orthogonality of complex exponentials yields $2\pi\delta_{k+1,k^\prime}$ instead of $2\pi\delta_{k,k^\prime}$.
\end{proof}

\bibliographystyle{IEEEtran}
\bibliography{./papers.bib}

\end{document}